\newcommand{\R}{\mathbb R}
\newcommand{\C}{\mathbb C}
\newtheorem{remark}{Remark}[section]
\newtheorem{lemma}{Lemma}[section]
\newtheorem{assumption}{Assumption}[section]
\newtheorem{proposition}{Proposition}[section]
\newtheorem{definition}{Definition}[section]
\newtheorem{theorem}{Theorem}[section]
\newtheorem*{proof}{Proof}[section]
\newcommand{\tabincell}[2]{\begin{tabular}{@{}#1@{}}#2\end{tabular}}
\title{A robust design of time-varying internal model principle-based control for ultra-precision tracking in a direct-drive servo stage}
\author{
 Yue Cao \\
  Department of Mechanical Engineering\\
  Tsinghua University\\
  Beijing 100084, China\\
   \And
  Zhen Zhang* \\
  Department of Mechanical Engineering\\
  Tsinghua University\\
  Beijing 100084, China\\
  \texttt{zzhang@tsinghua.edu.cn} \\
}
\begin{document}
\maketitle
\begin{abstract}
This paper proposes a robust design of the time-varying internal model principle-based control (TV-IMPC) for tracking sophisticated references generated by linear time-varying (LTV) autonomous systems. The existing TV-IMPC design usually requires a complete knowledge of the plant I/O (input/output) model, leading to the lack of structural robustness. To tackle this issue, we, in the paper, design a gray-box extended state observer (ESO) to estimate and compensate unknown model uncertainties and external disturbances. By means of the ESO feedback, the plant model is kept as nominal, and hence the structural robustness is achieved for the time-varying internal model. It is shown that the proposed design has bounded ESO estimation errors, which can be further adjusted by modifying the corresponding control gains. To stabilize the ESO-based TV-IMPC, a time-varying stabilizer is developed by employing Linear Matrix Inequalities (LMIs). Extensive simulation and experimental studies are conducted on a direct-drive servo stage to validate the proposed robust TV-IMPC with ultra-precision tracking performance ($\sim 60$nm {RMSE} out of $\pm80$mm stroke).
\end{abstract}


\section{Introduction}
\label{intro}
{One of the central themes in the control of mechatronic systems is trajectory tracking, enabling many important applications to micro-/nano-precision manipulating systems, such as advanced machine tools~\cite{HUANG2022}, lithography machining process in semiconductors~\cite{haq2019}, and nano-scale inspection instrument~\cite{mitrovic2022}.}
Continuing research efforts have been devoted to improving the tracking precision and robustness, for example, sliding mode control~\cite{ding2020,wang2022}, $H_\infty$ control~\cite{yan2019,zhang2021}, adaptive robust control~\cite{hu2020,chen2021}, iterative learning control~\cite{zheng2017,zhang2023}, etc. As one of the most investigated approaches, the \emph{Internal Model Principle} (IMP)~\cite{fran1976}, related to the \emph{output regulation} problem~\cite{isi1990,serrani2001}, is a powerful control methodology to asymptotically tracking/rejecting signals generated by  autonomous exogenous systems. Although the IMP-based control theory are {well established for} LTI and nonlinear systems~\cite{yang2019,HU2022556,mei2022}, the results for LTV systems remain open, mainly due to the fundamental challenges of constructing a robust time-varying internal model.

There have been advances of systematic design of the time-varying internal model principle-based control (TV-IMPC) for LTV systems~\cite{zhang2010,zhang2014,Paunonen2017}, which are applied to the control of camless engine valve~\cite{2014Time}. It is worth noting that the existing design still lacks structural robustness and cannot guarantee asymptotic tracking performance in the presence of plant model uncertainties and/or external disturbances. Currently, there is no general theoretical approach to solve this problem, and specific class of system models satisfying structural robustness is usually infeasible in practice. This motivates us to investigate an alternative to achieve the robustness of TV-IMPC.

One straightforward idea to this problem is resorting an adaptive control-based design to estimate the actual plant dynamics in real-time~\cite{schmidt2020,LIU202281}. This method, however, significantly affect the transient behavior of the TV-IMPC controller, leading to time-varying system stabilization task more challenging. An alternative approach is to adopt observer-based design to estimate the model uncertainties and external disturbances, and keep the plant model as nominal through  feedback. This method offers a simple control structure yet improved robustness of the time-varying internal model.

In this thread, the \emph{Disturbance Observer-Based Control} (DOBC) methods are frequently utilized for external disturbance estimation~\cite{chen2015,rouhani2021}. This method, however, has a limitation as it requires a precise  plant model, and thus it cannot directly estimate the model uncertainties. To overcome this limitation, we consider the \emph{Extended State Observer} (ESO)~\cite{li2019,zhao2021,WU2022}, which is able to estimate both model uncertainties and disturbances simultaneously.

The ESO is commonly used as a component of the \emph{Active Disturbance Rejection Control} (ADRC)~\cite{gao2006,han2009}. In the ESO structure, both the unmodeled system dynamics and the external disturbances are treated as an extra system state, and then a Luenberger observer is designed for the augmented system, which enables the estimation of lumped model uncertainties and disturbances. As a result, ESO can be used in plant model compensation to improve system robustness~\cite{sayem2016,wang2017}.

The most widely-used ESO is designed in a black-box manner, which is model-free, and might result in estimation degradation under non-negligible model uncertainties or disturbances. Alternatively, if a sufficiently accurate plant model is available, it can be incorporated into the ESO design to enhance the estimation performance. This approach is known as the gray-box ESO~\cite{zhang2016}. Based on the above discussion, we, in this paper, utilize the gray-box ESO to achieve an enhanced estimation of plant model uncertainties and external disturbances.

The contributions of this study are listed as follows:
\begin{itemize}
\item A robust design for the TV-IMPC methodology is proposed. The structural robustness of the time-varying internal model is achieved by constructing a nominal plant model via the ESO feedback.
\item The ESO is designed in the gray-box fashion to better estimate the potential model uncertainties and external disturbances, bringing improved estimation accuracy and overall closed-loop system stability.
\item The experimental results validate the theoretical analysis and demonstrate an ultra-precision tracking performance, of which the RMSE is $61.14$nm within $\pm80$mm motion stroke.
\end{itemize}

The rest of the paper is structured as follows. In Section~\ref{pro}, we outline the control problem for time-varying systems with unknown model uncertainties and external disturbances. Section~\ref{design} presents the design and analysis of the robust TV-IMPC, including the time-varying internal model controller, the gray-box ESO, and the time-varying stabilizer. Section~\ref{exp} provides extensive simulation and experimental results that demonstrate the outstanding tracking performance of the proposed robust design. The paper ends up with the conclusion in Section~\ref{conclusion}.
\section{Problem formulation}\label{pro}
{Consider a discrete LTI plant system of the form}
\begin{equation}\label{plant}
\begin{array}{rcl}
x(k+1)&=&Ax(k)+Bu(k)+Ed(k)\,\\[1mm]
y(k)&=&Cx(k)\,\\[1mm]
e(k)&=&y(k)-r(k)\,,
\end{array}
\end{equation}
with state $x\in\R^n$, control input $u\in\R$, external disturbances and model uncertainties $d\in\R^n$, output $y\in\R$, and tracking error $e\in \R$.

The reference $r\in \R$ to be tracked is generated by an LTV exosystem of the form
\begin{equation}\label{exotime}
\begin{array}{rcl}
 w(k+1)&=&S(k)w(k)\\[1mm]
r(k)&=&Q(k)w(k)\,,
\end{array}
\end{equation}
with exogenous state $w\in\R^{\rho}$.

The following assumptions are made for the systems~(\ref{plant}) and~(\ref{exotime}).
\begin{assumption}
\label{ctrl}
The pair $(A,\, B)$ is controllable, and the pair $(A,\, C)$ is observable, and $x(k+1)=Ax(k)$ is asymptotically stable (if not, one can stabilize it).
\end{assumption}
\begin{assumption}
\label{lump}
The external disturbances and the model uncertainties $d(k)$ is bounded and can be lumped in the input channel as $d_l(k)\in\R$.
\end{assumption}

With Assumption~\ref{ctrl} and~\ref{lump}, system~(\ref{plant}) can be written in the following controllable canonical form as
\begin{equation}\label{dplant}
\begin{array}{rcl}
x(k+1)&=&A_{\rm c}x(k)+B_{\rm c}u(k)+E_{\rm c}d_l(k)\,\\[1mm]
y(k)&=&C_{\rm c}x(k)\,\\[1mm]
e(k)&=&y(k)-r(k)\,,
\end{array}
\end{equation}
where
\begin{equation*}\label{para}
\begin{array}{rcl}
A_{\rm c}&=&\begin{bmatrix} 0 & 1 & \cdots& 0\\
\vdots & \vdots & \ddots & \vdots\\
0 & 0& \cdots & 1\\
a_0 & a_1 & \cdots & a_{n-1}\end{bmatrix}\,,\\[10mm]
B_{\rm c}&=&\begin{bmatrix} 0 & \cdots & 0 &1\end{bmatrix}^{\top}\,,\\[1mm]
E_{\rm c}&=&\begin{bmatrix} 0 & \cdots & 0 &1/b\end{bmatrix}^{\top}\,,
\end{array}
\end{equation*}
and control input gain $b>0$.

\begin{definition}
\label{pbm}
The tracking control problem is to find a compensator fed only by the measurement of $y(k)$, such that there exists a certain function $\Delta(k)\in\R$ satisfying $\lim\nolimits_{k\rightarrow\infty}|\Delta(k)|\le\alpha$, where $\alpha$ is a positive constant, and the tracking error $e(k)$ satisfies $\lim\nolimits_{k\rightarrow\infty}(e(k)-\Delta(k))=0$.
\end{definition}

In the following section, we will eliminate the term $E_{\rm c}d_l(k)$ in the dynamics~(\ref{dplant}), so that the dynamics~(\ref{dplant}) can be transformed into a nominal one, which achieves a robust design of time-varying internal model.

\section{{Robust TV-IMPC}}\label{design}
\begin{figure*}[ht]
  \centering
  \includegraphics[width=\hsize]{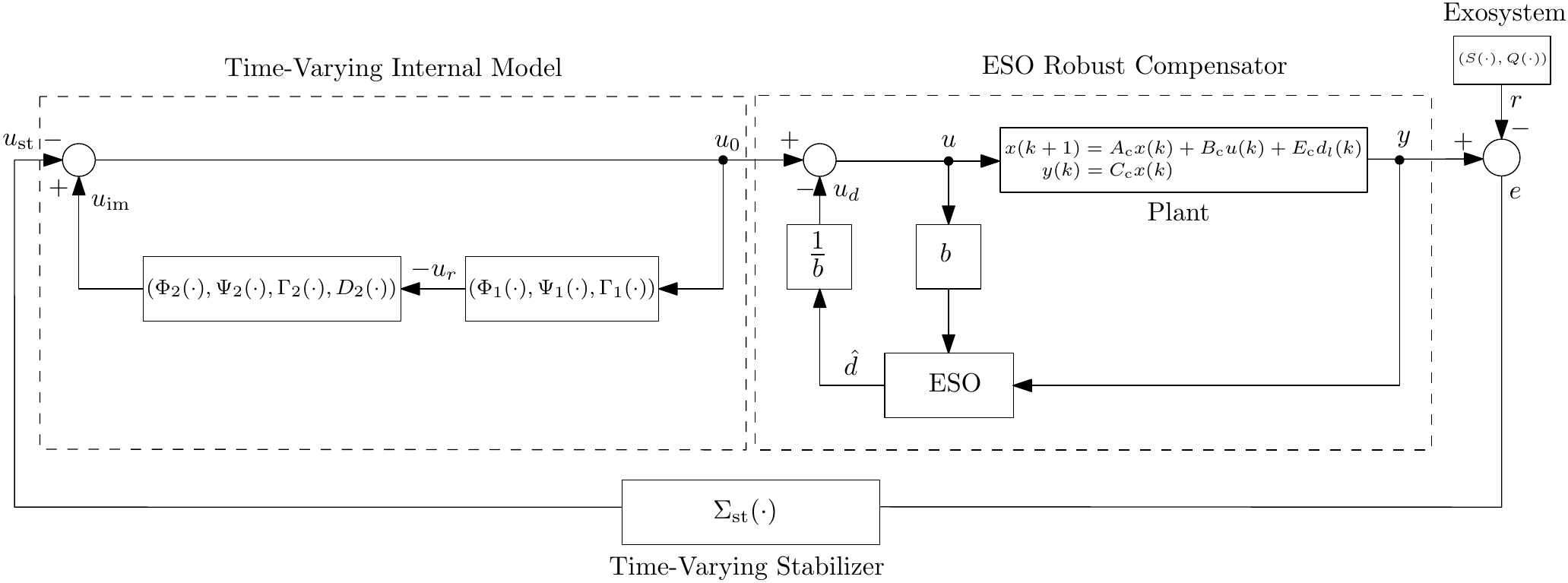}\\
  \caption{The block diagram of the robust TV-IMPC.}\label{dia}
\end{figure*}
The block diagram of the proposed robust TV-IMPC system is illustrated in Figure~\ref{dia}. And the detailed design includes a time-varying internal model, a gray-box ESO and a time-varying stabilizer. And {the motivation of  such a design is as follows.

Recall that the existing TV-IMPC design requires a complete knowledge of the plant I/O (input/output) model~\cite{zhang2010}. As a result, the structural robustness of the TV-IMCC is not ensured if there is a deviation between the actual plant model and the nominal one used in the controller design. To tackle this problem, the ESO, in this paper, is utilized to estimate and compensate the model uncertainties and disturbances in real-time. Therefore, the plant model is kept as nominal by means of the ESO feedback, and hence the structural robustness is achieved for the time-varying internal model.}

To proceed, we begin with the design of {nominal} time-varying internal model.
\subsection{Preliminaries of time-varying internal model design}\label{imc}
To solve the tracking control problem in Definition~\ref{pbm} by the internal model principle-based approach, a desired input
\begin{equation}\label{uff}
u_{\rm ff}(k)=R(k)w(k)
\end{equation}
needs to be found to make the tracking error converge~\cite{zhang2010}. Note that exosystem state $w$ is not available for feedback. 
In order to construct $u_{\rm ff}$, a instrumental step is to {\em immerse} the exosystem~(\ref{exotime}) with output~(\ref{uff}),
\begin{equation}\label{exotime2}
\begin{array}{rcl}
 w(k+1)&=&S(k)w(k)\\[1mm]
u_{\rm ff}(k)&=&R(k)w(k)
\end{array}
\end{equation}
into the following $n_\xi$-order system
\begin{equation}\label{ims}
\begin{array}{rcl}
\xi(k+1)&=&\Phi(k)\xi(k)\\[1mm]
u_{\rm im}(k)&=&\Gamma(k)\xi(k)\,,
\end{array}
\end{equation}
whose output includes every output of system~(\ref{exotime2}). The definition of {\em system immersion} is as follows.
\begin{definition}
The system~(\ref{exotime2}) is said to be immersed into system~(\ref{ims}), if there exists a smooth mapping $U:\R\mapsto\R^{n_\xi\times\rho}$, satisfying
\begin{equation}\label{map}
\begin{array}{rcl}
U(k+1)+U(k)S(k)&=&\Phi(k)U(k)\\[1mm]
R(k)&=&\Gamma(k)U(k)\,.
\end{array}
\end{equation}
\end{definition}

To find a system immersion, system~(\ref{ims}), in our design,  is constructed by two time-varying controllers, i.e.
\begin{equation}\label{ims1}
\begin{array}{rcl}
\xi_1(k+1)&=&\Phi_1(k)\xi_1(k)+\Psi_1(k)u_0(k)\\[1mm]
u_r(k)&=&\Gamma_1(k)\xi_1(k)\,,
\end{array}
\end{equation}
and
\begin{equation}\label{ims2}
\begin{array}{rcl}
\xi_2(k+1)&=&\Phi_2(k)\xi_2(k)+\Psi_2(k)(-u_r(k))\\[1mm]
u_{\rm im}(k)&=&\Gamma_2(k)\xi_2(k)+D_2(k)(-u_r(k))\,,
\end{array}
\end{equation}
with the state $(\xi_1,\xi_2)\in(\R^{\rho},\R^{\rho-1})$, embedded input $u_r\in\R$, and controller output $u_{\rm im}\in\R$.
\begin{lemma}\label{l2}
The time-varying controller (\ref{ims1})-(\ref{ims2}) can be obtained by solving the following time-varying Sylvester equation
\begin{equation}\label{se}
\begin{bmatrix}\mathcal{O}_{\Phi_1(k)} & \mathcal{C}_{\Psi_1(k)} \end{bmatrix} \begin{bmatrix}1\\q(k)\\p(k)\end{bmatrix}= \mathcal{O}_{S(k)}\begin{bmatrix}1\\q(k)\end{bmatrix}\,.
\end{equation}
\end{lemma}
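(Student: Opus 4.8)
The plan is to read the time-varying Sylvester equation (\ref{se}) as the algebraic reduction of the immersion condition (\ref{map}) specialized to the cascade (\ref{ims1})-(\ref{ims2}), and to argue constructively: from a solution $(q(k),p(k))$ of (\ref{se}) I would read off the controller matrices and then verify that (\ref{map}) holds, so that the internal model (\ref{ims}) indeed immerses the exosystem (\ref{exotime2}). This is the direction the statement ``can be obtained by solving'' calls for, and it has the advantage that the verification step is mechanical once the coordinate change is exhibited.

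First I would fix the structural degrees of freedom: take the first block $(\Phi_1(k),\Gamma_1(k))$ in an observable canonical form with $\Psi_1(k)$ the associated injection column, so that $\mathcal{O}_{\Phi_1(k)}$ and $\mathcal{C}_{\Psi_1(k)}$ are the natural time-varying window operators obtained by propagating $\Phi_1$ over $\rho$ steps; since $w\in\R^{\rho}$, the exosystem data $\mathcal{O}_{S(k)}$ built from the output map $R(k)$ that generates $u_{\rm ff}$ is likewise a finite window operator, so (\ref{se}) is an honest linear system in the vectors $q(k),p(k)$ at each $k$. Next I would substitute $\xi(k)=U(k)w(k)$ into the cascade and advance one step using $w(k+1)=S(k)w(k)$: matching the state recursion reproduces the trajectory part of (\ref{map}), and matching $u_{\rm im}$ reproduces $R(k)=\Gamma(k)U(k)$. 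Iterating the trajectory relation and stacking the successive output rows over the window converts these two functional identities into a single block identity between the controller observability data and the exosystem observability data; writing the first column of the coordinate change as the augmented vector $\begin{bmatrix}1 & q(k)^{\top}\end{bmatrix}^{\top}$ and collecting the injection contributions of $\Psi_1(k)$ into the term $\mathcal{C}_{\Psi_1(k)}p(k)$ yields exactly (\ref{se}). Solving it returns $q(k),p(k)$, hence $\Gamma_1(k)$ and the signal $u_r(k)$, after which the second-stage matrices $(\Phi_2(k),\Psi_2(k),\Gamma_2(k),D_2(k))$ in (\ref{ims2}) are pinned down by the analogous output-matching for the remaining stage, completing the controller.

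The hard part will be the time-varying window bookkeeping. In the LTI case one simply invokes the characteristic polynomial of $S$ and reads off constant coefficients, but here each propagation step composes the shifted matrices $S(k),S(k+1),\dots$ and $\Phi_1(k),\Phi_1(k+1),\dots$, so I must check that a single pair $(q(k),p(k))$ simultaneously satisfies the matching generated by every stacked row, and that $\mathcal{C}_{\Psi_1(k)}$ has full column rank uniformly in $k$ so that $p(k)$ is recoverable and stays bounded. Establishing this consistency together with the uniform rank/invertibility of the window operators along the sliding horizon---rather than any single algebraic manipulation---is where the genuine difficulty lies.
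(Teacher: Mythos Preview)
The paper's own ``proof'' is a one-line pointer to Ref.~\cite{zhang2010}, so there is no in-paper argument to compare your sketch against at the level of detail. Your overall strategy---reduce the immersion condition (\ref{map}) for the cascade (\ref{ims1})--(\ref{ims2}) to a linear algebraic system in canonical coordinates, solve it, then verify---is the right shape. But two structural points in your reading are off, and they change what the ``hard part'' actually is.

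First, you have the roles of the two blocks reversed. The paper states explicitly (just after Lemma~\ref{l2}) that $q(k)\in\R^{\rho-1}$ and $p(k)\in\R^{\rho}$ are the parameter vectors of $\Phi_2(\cdot)$ and $\Gamma_2(\cdot)$ in controller canonical form. The first triplet $(\Phi_1,\Psi_1,\Gamma_1)$ is \emph{not} a free design variable to be recovered from (\ref{se}); by Proposition~\ref{io} and the concrete choice (\ref{ims1d}) it is fixed to be a copy of the plant's nominal I/O model $(A_{\rm c},B_{\rm c},C_{\rm c})$. So (\ref{se}) is solved for the \emph{second}-stage controller data, not for $\Gamma_1$ as you write.

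Second, your description of $\mathcal{O}_{S(k)}$ and $\mathcal{C}_{\Psi_1(k)}$ as sliding-window operators built by ``composing the shifted matrices $S(k),S(k+1),\dots$'' does not match the paper's definitions (\ref{os})--(\ref{cp}). Those are banded Toeplitz-type matrices assembled from the \emph{instantaneous} coefficient vector $\alpha(k)$ (the first column of $S_{\rm o}(k)$) and from $\Psi_1(k)$ alone; they encode polynomial-convolution structure at a fixed $k$, not products along a time horizon. Consequently the difficulty you flag---simultaneous consistency of one $(q(k),p(k))$ across many stacked rows generated by shifted dynamics---does not appear in this formulation: (\ref{se}) is simply a linear system at each $k$. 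The genuine solvability issue is whether $\begin{bmatrix}\mathcal{O}_{\Phi_1(k)} & \mathcal{C}_{\Psi_1(k)}\end{bmatrix}$ has full column rank uniformly in $k$, which in this convolution picture is a coprimeness (non-resonance) condition between the plant's I/O polynomial and the exosystem coefficients, not a horizon-bookkeeping problem.
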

\begin{proof}
The proof is referred to the results in Ref.~\cite{zhang2010}. \hfill $\square$
\end{proof}
Specifically, $q(k)\in\R^{\rho-1}$ and $p(k)\in\R^{\rho}$ are the parameter vectors in the controller canonical form of $\Phi_2(\cdot)$ and $\Gamma_2(\cdot)$, {respectively}. The operator $\mathcal{O}_{S(k)}\in\R^{(2\rho-1)\times\rho}$ is defined as
\begin{equation}\label{os}
\mathcal{O}_{S(k)}=
\begin{bmatrix}
\alpha(k) & 1& 0 & \cdots & 0 \\
 0 & \alpha(k) & 1 & \ddots & \vdots \\
  \vdots & \ddots & \ddots & \ddots & 0 \\
   0 & \cdots & 0 & \alpha(k) & 1 \\
    0 & \cdots \ & \cdots & 0 &\alpha(k)
    \end{bmatrix}\,,
\end{equation}
with $\alpha(k)\in\R^{\rho}$ collecting the coefficients of the exosystem matrix $S(k)$ in its observer canonical form $S_{\rm o}(k)$, and $\mathcal{O}_{\Phi_1(k)}\in\R^{(2\rho-1)\times\rho}$ is defined similarly with $\mathcal{O}_{S(k)}$; and the operator $\mathcal{C}_{\Psi_1(k)}\in\R^{(2\rho-1)\times\rho}$ is defined as
\begin{equation}\label{cp}
\mathcal{C}_{\Psi_1(k)}=\begin{bmatrix} \psi_{\rho-1} & \cdots & \begin{bmatrix} 0_{\rho-2} \\ \psi_1 \end{bmatrix} & \begin{bmatrix} 0_{\rho-1} \\ \psi_0 \end{bmatrix} \end{bmatrix} \,,
\end{equation}
with
\begin{equation*}
\psi_{i+1}= \begin{bmatrix} \psi_i \\ 0 \end{bmatrix},\quad \psi_0=\Psi_1(k),\quad i=0,1,\cdots,\rho-2\,.
\end{equation*}

\begin{proposition}\label{io}
For the time-varying internal model triplet \[(\Phi_{1}(\cdot),~\Psi_{1}(\cdot),~\Gamma_{1}(\cdot))\,,\]it should have the same I/O  as that of the plant model~(\ref{dplant}) (from $u$ to $y$).
\end{proposition}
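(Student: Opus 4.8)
The plan is to show that fixing the first triplet $(\Phi_1(\cdot),\Psi_1(\cdot),\Gamma_1(\cdot))$ to the plant I/O is exactly what forces the two-controller cascade (\ref{ims1})--(\ref{ims2}) to reproduce the \emph{correct} feedforward $u_{\rm ff}$ of (\ref{uff}), namely the one that drives the nominal plant (\ref{dplant}) (with $d_l\equiv 0$ after ESO compensation) to $y=r$. I would argue both directions: that plant matching suffices for the immersion to generate this feedforward, and that any mismatch destroys tracking.

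First I would make precise the feedforward that tracking demands. Writing the steady-state state as $x(k)=\Pi(k)w(k)$ and imposing $y=r$ on (\ref{dplant}) yields the time-varying regulator equations $\Pi(k+1)S(k)=A_{\rm c}\Pi(k)+B_{\rm c}R(k)$ and $C_{\rm c}\Pi(k)=Q(k)$, which determine $R(\cdot)$, and hence $u_{\rm ff}=R(k)w(k)$, \emph{jointly} from the exosystem $(S(\cdot),Q(\cdot))$ and the plant $(A_{\rm c},B_{\rm c},C_{\rm c})$. The essential observation is that $r$ and $u_{\rm ff}$ are two outputs of the \emph{same} exogenous state $w$, and the operator carrying $u_{\rm ff}$ to $r$ is precisely the plant map from $u$ to $y$: in steady state, driving (\ref{dplant}) by $u_{\rm ff}$ regenerates $r$.

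Next I would exploit the cascade structure. Setting the exciting input $u_0\equiv 0$ and eliminating $u_r$, the series connection of (\ref{ims1}) and (\ref{ims2}) collapses to the autonomous pair $(\Phi(\cdot),\Gamma(\cdot))$ of (\ref{ims}); the immersion (\ref{map}) then requires $R(k)=\Gamma(k)U(k)$ along the solution of $U(k+1)+U(k)S(k)=\Phi(k)U(k)$. I would substitute the plant-matching hypothesis—that $(\Phi_1,\Psi_1,\Gamma_1)$ realizes the same input-output operator as $C_{\rm c}(zI-A_{\rm c})^{-1}B_{\rm c}$—into this relation and invoke Lemma~\ref{l2}: with the first block fixed to the plant, the Sylvester equation (\ref{se}) solves for the second-block parameters $q(\cdot),p(\cdot)$ that encode the exosystem copy through $\mathcal{O}_{S(k)}$. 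The composition then factors as the exosystem copy followed by the plant model, so the autonomous output $u_{\rm im}$ coincides with the regulator feedforward $R(k)w(k)$. Conversely, if the first triplet did not match the plant I/O, the factor carrying the reference mode back to the input would differ from that of the plant, the generated $u_{\rm im}$ would violate the steady-state identity $y=r$ obtained by driving (\ref{dplant}) with $u_{\rm im}$, and Definition~\ref{pbm} could not be met; this supplies necessity.

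The step I expect to be the main obstacle is the rigorous reading of ``same I/O'' in a setting where the first block may be a time-varying realization of the (LTI) plant operator, so that ordinary transfer functions cannot be applied directly. I must phrase plant matching as equality of the underlying input-output operators—equivalently, matching the canonical coefficients—and then verify that feeding this choice into (\ref{se}) yields $\Gamma(k)U(k)=R(k)$ with $R(\cdot)$ the solution of the regulator equations. Reconciling the algebraic Sylvester identity (\ref{se}) with the recursive regulator equations, while tracking carefully the one-step shift $U(k+1)$ and the observer-canonical coefficient vector $\alpha(k)$ inside $\mathcal{O}_{S(k)}$, is the technical heart; the remaining manipulations on the cascade are routine bookkeeping.
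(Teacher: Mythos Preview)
Your proposal is thoughtful, but it addresses a question the paper does not actually set itself. In the paper, Proposition~\ref{io} is not accompanied by any proof: it is stated prescriptively (``it \emph{should} have the same I/O\ldots'') as a design requirement on the first triplet, not as a theorem to be derived. The paper simply imposes this condition and then, in Proposition~\ref{immersion}, \emph{uses} it as a standing hypothesis to establish the immersion, deferring the technical justification to the phase-variable canonical forms of Ref.~\cite{1969Ra} and the results of Ref.~\cite{zhang2010}. So there is no ``paper's own proof'' to compare against; what you have written is a rationale for the design choice rather than a reproduction of an argument the authors give.

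That said, your rationale is broadly the right one and is in the spirit of the cited references: the regulator equations tie $R(\cdot)$ to both the exosystem and the plant triple $(A_{\rm c},B_{\rm c},C_{\rm c})$, and the cascade (\ref{ims1})--(\ref{ims2}) reproduces $u_{\rm ff}$ precisely when the first block replicates the plant's input--output map, so that solving (\ref{se}) for the second block encodes the exosystem copy. One caution: your necessity sketch (``any mismatch destroys tracking'') is stronger than what the paper claims or needs; Proposition~\ref{io} is only asserting a sufficient design rule, and the paper never argues uniqueness of the internal-model triplet. If you wish to retain a justification, the cleanest route is to present the sufficiency direction only and point to Proposition~\ref{immersion} (and the underlying references) as the place where the choice is shown to yield the immersion~(\ref{map2}).
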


\begin{proposition}\label{immersion}
With the internal model design~(\ref{ims1})-(\ref{ims2}), the Sylverster equation~(\ref{se}) and  Proposition~\ref{io}, the system~(\ref{exotime2}) is immersed into system~(\ref{ims})
\begin{equation*}\label{imsi}
\begin{array}{rcl}
\xi(k+1)&=&\Phi(k)\xi(k)\,\\[1mm]
u_{\rm im}(k)&=&\Gamma(k)\xi(k)\,,\\[1mm]
\end{array}
\end{equation*}
where
\begin{equation*}
\begin{array}{rcl}
\Phi(k)&=&\begin{bmatrix}\Phi_1(k)-\Psi_1(k)D_2(k)\Gamma_1(k) & \Psi_1(k)\Gamma_2(k)\\-\Psi_2(k)\Gamma_1(k)&\Phi_2(k)\end{bmatrix}\,,\\[5mm]
\Gamma(k)&=&\begin{bmatrix}-D_2(k)\Gamma_1(k)&\Gamma_2(k)\end{bmatrix}\,.\\[1mm]
\end{array}
\end{equation*}
And there exists a smooth mapping $\Sigma:\R\mapsto\R^{2\rho-1\times\rho}$, satisfying
\begin{equation}\label{map2}
\begin{array}{rcl}
\Sigma(k+1)+\Sigma(k)S(k)&=&\Phi(k)\Sigma(k)\\[1mm]
R(k)&=&\Gamma(k)\Sigma(k)\,.
\end{array}
\end{equation}
\end{proposition}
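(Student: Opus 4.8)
The plan is to prove Proposition~\ref{immersion} in two stages: first, to verify that the feedback interconnection of the two controllers~(\ref{ims1})--(\ref{ims2}) produces exactly the composite autonomous system~(\ref{ims}) with the stated $\Phi(k)$ and $\Gamma(k)$; and second, to exhibit an immersion mapping $\Sigma(\cdot)$ satisfying~(\ref{map2}).

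For the first stage, I would close the loop by feeding the internal-model output back as the first controller's input, i.e.\ setting $u_0(k)=u_{\rm im}(k)$. Substituting $u_r(k)=\Gamma_1(k)\xi_1(k)$ from~(\ref{ims1}) into~(\ref{ims2}) gives $u_{\rm im}(k)=\Gamma_2(k)\xi_2(k)-D_2(k)\Gamma_1(k)\xi_1(k)$; inserting this into the $\xi_1$-update and collecting the $\xi_1$- and $\xi_2$-terms yields the first block row of $\Phi(k)$, while the $\xi_2$-update directly gives the second block row, and the expression for $u_{\rm im}(k)$ gives $\Gamma(k)$. This stage is pure substitution and bookkeeping with the block partition $\xi=(\xi_1,\xi_2)\in(\R^{\rho},\R^{\rho-1})$, so that $n_\xi=2\rho-1$ as required.

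The substantive stage is constructing $\Sigma$. I would partition $\Sigma(k)=\big[\Sigma_1(k)^{\top}\ \Sigma_2(k)^{\top}\big]^{\top}$ conformably with $(\xi_1,\xi_2)$ and read off from~(\ref{map2}) a coupled pair of matrix difference equations for $\Sigma_1$ and $\Sigma_2$, together with the single algebraic output-matching condition $R(k)=-D_2(k)\Gamma_1(k)\Sigma_1(k)+\Gamma_2(k)\Sigma_2(k)$. The key is that the controller parameters were not chosen freely but are the solution $(q(k),p(k))$ of the time-varying Sylvester equation~(\ref{se}) of Lemma~\ref{l2}, and that by Proposition~\ref{io} the triple $(\Phi_1,\Psi_1,\Gamma_1)$ reproduces the plant I/O. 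I would use these two facts to show that the invariant-subspace relation $\xi(k)=\Sigma(k)w(k)$ is consistent, so that the feedforward trajectory $u_{\rm ff}(k)=R(k)w(k)$ of~(\ref{exotime2}) is reproduced on the range of $\Sigma$, which is precisely the immersion statement in the sense of~(\ref{map}).

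The hard part will be verifying that a \emph{single} $\Sigma$ satisfies the difference equation and the output-matching equation in~(\ref{map2}) simultaneously, rather than each in isolation. My expectation is that this closes by relating the two block components of $\Sigma$ through the operators $\mathcal{O}_{\Phi_1(k)}$, $\mathcal{C}_{\Psi_1(k)}$ and $\mathcal{O}_{S(k)}$ appearing in~(\ref{se}): the solution $(1,q(k),p(k))$ encodes exactly the change of coordinates aligning the exosystem modes with the internal-model modes, so the $\Sigma$ assembled from it satisfies both conditions automatically. Since the single-controller immersion is already established in Ref.~\cite{zhang2010}, I would lean on that result for the mappings associated with~(\ref{ims1}) and~(\ref{ims2}) individually, and then compose them through the feedback interconnection of the first stage, checking that the composition respects~(\ref{map2}).
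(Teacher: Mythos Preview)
Your proposal is correct and follows essentially the same route as the paper: the paper's own proof is a two-line deferral to Proposition~3.3 and Lemma~4.1 of Ref.~\cite{zhang2010} (after noting the discrete-time phase-variable canonical forms of Ref.~\cite{1969Ra}), and your plan simply spells out that program---verifying the interconnection formulas for $\Phi(k),\Gamma(k)$ by substitution with $u_0=u_{\rm im}$, then building $\Sigma$ from the Sylvester solution $(1,q(k),p(k))$ and the I/O-matching of Proposition~\ref{io}, exactly as those cited results do in continuous time. Your explicit Stage~1 computation is in fact more than the paper provides, and your Stage~2 sketch correctly identifies that the simultaneous satisfaction of the difference and output-matching equations in~(\ref{map2}) is precisely what the operators $\mathcal{O}_{S(k)},\mathcal{O}_{\Phi_1(k)},\mathcal{C}_{\Psi_1(k)}$ in~(\ref{se}) are engineered to guarantee.
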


\begin{proof}
{By noting the discrete-time version of phase-variable canonical forms in Ref.~\cite{1969Ra},} the proof is then followed by the results in Proposition 3.3 and Lemma 4.1 of Ref.~\cite{zhang2010}. \hfill $\square$
\end{proof}

As long as the immersion is found by Proposition~\ref{immersion}, and the augmented system of the controller~(\ref{ims}) and the plant model~(\ref{dplant}) is stabilized, the tracking error is converged thanks to the time-varying internal model.
\subsection{Design of gray-box ESO}\label{esod}
It is worth noting that the design of the triplet $(\Phi_{1}(\cdot),~\Psi_{1}(\cdot),~\Gamma_{1}(\cdot))$ in Proposition~\ref{io} is not straightforward, because that the system~(\ref{dplant}) contains unknown variable $d_l(k)$. If the nominal form of system~(\ref{dplant}) is utilized to construct $(\Phi_{1}(\cdot),~\Psi_{1}(\cdot),~\Gamma_{1}(\cdot))$ as
\begin{equation}\label{ims1d}
(\Phi_{1}(\cdot),~\Psi_{1}(\cdot),~\Gamma_{1}(\cdot))=(A_{\rm c},~B_{\rm c},~C_{\rm c})\,,
\end{equation}
it remains a feedback control to compensate the model uncertainties and external disturbances $d_l(k)$ in the plant model~(\ref{dplant}). To this end, we design an ESO-based compensator.

Note that the design of ESO can be classified into different ones according to the usage of the system information. Specifically, the design of black-box ESO does not consider the information of plant dynamics in triplet $(A_{\rm c}, B_{\rm c}, C_{\rm c})$. To improve the estimation accuracy, the gray-box ESO is chosen, in this paper, by using the given plant model information.

Based on the system dynamics~(\ref{dplant}), the gray-box ESO is designed as follows
\begin{equation}\label{obs}
\begin{split}
\hat x(k+1)&=A_{\rm c}\hat x(k)+B_{\rm c}u(k)+E_{\rm c}\hat d_l(k)-L_1(\hat y(k)-y(k))\\[2mm]
\hat d(k+1)&=\hat d_l(k)-L_2(\hat y(k)-y(k))\\[2mm]
\hat y(k)&=C_{\rm c}\hat x(k)\,,
\end{split}
\end{equation}
where $L_1\in\R^n$ and $L_2\in\R$ are observer gains. Specifically, the feedback of the ESO is chosen as $u_d(k)=\dfrac{\hat d_l(k)}b$. Then the plant model dynamics with feedback becomes
\begin{equation}\label{sf}
\begin{split}
x(k+1)&= A_{\rm c}x(k)+B_{\rm c}u_0(k)+E_{\rm c}(d_l(k)-\hat d_l(k))\\[2mm]
y(k)&=C_{\rm c}x(k)\,,
\end{split}
\end{equation}
by noting that $u(k)=u_0(k)-u_d(k)$. System~(\ref{sf}) with the ESO feedback should behave similar to the following nominal system
\begin{equation}\label{ns}
\begin{split}
x_{\rm n}(k+1)&=A_{\rm c}x_{\rm n}(k)+B_{\rm c}u_0(k)\\[2mm]
y_{\rm n}(k)&=C_{\rm c}x_{\rm n}(k)\,.
\end{split}
\end{equation}

\begin{remark}
\label{bound}
To clarify the similarity between system~(\ref{sf}) and~(\ref{ns}), it is necessary to verify the boundedness of their output difference $y(k)-y_{\rm n}(k)$.
\end{remark}
\subsection{Analysis of estimation error }\label{est}
Define the ESO estimation errors as $\tilde x(k)=\hat x(k)-x(k)$, and $~\tilde d_l(k)=\hat d_l(k)-d_l(k)$. According to~(\ref{dplant}) and~(\ref{obs}), it is found that
\begin{equation}\label{error}
\begin{split}
\tilde x(k+1)&=(A_{\rm c}-L_1C_{\rm c})\tilde x(k)+E_{\rm c}\tilde d_l(k)\\[2mm]
\tilde d_l(k+1)&=-L_2C_{\rm c}\tilde x(k)+\tilde d_l(k)+(d_l(k+1)-d_l(k))\,.
\end{split}
\end{equation}
Error dynamics~(\ref{error}) can be put in a more compact form as
\begin{equation}\label{erd3}
\zeta(k+1)=A_{\rm a}\zeta(k)+B_{\rm a}d_{ld}(k)\,,
\end{equation}
with $\zeta(k):=\begin{bmatrix}\tilde x(k+1)\\\tilde d_l(k+1)\end{bmatrix}, A_{\rm a}:=\begin{bmatrix}A_{\rm c}-L_1C_{\rm c} & E_{\rm c}\\-L_2C_{\rm c} &I\end{bmatrix}, d_{ld}(k):=d_l(k+1)-d_l(k)$, and $B_{\rm a}=\begin{bmatrix}0 & \cdots & 0 & 1\end{bmatrix}^{\top}$.
According to Assumption~\ref{lump}, it is obvious that $|d_{ld}(k)|\le\delta$, where $\delta\in\R$ is a positive constant. Equation~(\ref{erd3}) yields that
\begin{equation}\label{ert}
\zeta(k)=A_{\rm a}^k\zeta(0)+\sum_{j=0}^{k-1}A_{\rm a}^{k-j-1}B_{\rm a}d_{ld}(j)\,.
\end{equation}
Note that in the design of ESO, it is necessary to make system~(\ref{erd3}) stable by calculating the observer gains $L_1$ and $L_2$ to make $A_{\rm a}$ Hurwitz. Thus, the term $A_{\rm a}^k\zeta(0)$ in equation~(\ref{ert}) converges to zero as $k\rightarrow \infty$. Therefore, $\zeta(k)$ converges to
\begin{equation}\label{ert1}
\lim_{k\rightarrow \infty}\zeta(k)=\sum_{j=0}^{k-1}A_{\rm a}^{k-j-1}B_{\rm a}d_{ld}(j)\,.
\end{equation}
Since $A_{\rm a}$ is Hurwitz, the condition $\rho(A_{\rm a})<1$ is satisfied, where $\rho(\cdot)$ represents the spectral radius of a matrix. With this, the following inequality can be obtained as
\begin{equation}\label{ert2}
\begin{split}
\lim_{k\rightarrow \infty}|\zeta(k)| \le& \lim_{k\rightarrow \infty}\delta|\sum_{j=0}^{k-1}A_{\rm a}^{k-j-1}B_{\rm a}|\\[2mm]
\le&\delta|(I-A_{\rm a})^{-1}B_{\rm a}|\\[2mm]
=&\delta L_2^{-1}|(E_cC_c)^{-1}B_{\rm a}|\,,\\[2mm]
\end{split}
\end{equation}
by noting that $|d_{ld}(k)|\le\delta$. Therefore, the estimation error of ESO~(\ref{obs}) is bounded by regulating the ESO gains $L_1$ and $L_2$ to stabilize the error dynamics~(\ref{erd3}), and the estimation error can be further reduced by increasing the ESO gain $L_2$.

Next, denote
\begin{equation}\label{deltad1}
\varepsilon(k):=x_{\rm n}(k)-x(k),
\end{equation}
with the state of the nominal system~(\ref{ns}) $x_{\rm n}(k)$. Combining Equation~(\ref{sf}) and (\ref{ns}) yields
\begin{equation}\label{nes}
\varepsilon(k+1)=A_{\rm c}\varepsilon(k)+E_{\rm c}\tilde d_l(k)\,.
\end{equation}
Note that $\tilde d_l(k)$ is bounded according to~(\ref{ert2}), thus
\begin{equation}\label{nest}
\varepsilon(k)=A_{\rm c}^k\varepsilon(0)+\sum_{j=0}^{k-1}A_{\rm c}^{k-j-1}E_{\rm c}\tilde d_l(j)\,.
\end{equation}
Similar to the induction in~(\ref{ert2}), it is obtained that
\begin{equation}\label{ert3}
\lim_{k\rightarrow \infty}|\varepsilon(k)|\le\sup(|\tilde d_l(k)|)|(I-A_c)^{-1}E_c|\,.
\end{equation}
Hence, $\varepsilon(k)$ is also bounded, and $|\varepsilon(k)|$ can be made arbitrarily small by pole placement of the nominal system dynamics $(A_c, B_c, C_c)$. Therefore, it is easy to know that
\begin{equation}\label{nest2}
\Delta(k):=-C_c\varepsilon(k)=y(k)-y_{\rm n}(k)
\end{equation}
is bounded, i.e.
\begin{equation}\label{ert4}
\lim_{k\rightarrow \infty}|\Delta(k)|\le\sup(|\tilde d_l(k)|)|(I-A_c)^{-1}E_cC_c|\,,
\end{equation}
and $\Delta(k)$ can be recognized as the additive uncertainty. Note that $\Delta(k)$ satisfies the constraints in Definition~\ref{pbm}, and verifies the similarity between system~(\ref{sf}) and~(\ref{ns}) in Remark~\ref{bound}.

With~(\ref{nest2}), system~(\ref{dplant}) can be transformed to a nominal one with the additive uncertainty $\Delta(k)$ as follows
\begin{equation}\label{dnplant}
\begin{array}{rcl}
x_{\rm n}(k+1)&=&A_{\rm c}x_{\rm n}(k)+B_{\rm c}u_0(k)\,\\[1mm]
y_{\rm n}(k)&=&C_{\rm c}x_{\rm n}(k)\,\\[1mm]
e(k)-\Delta(k)&=&y_{\rm n}(k)-r(k)\,.
\end{array}
\end{equation}

According to the above analysis, the nominal design~(\ref{ims1d}) can be utilized in the controller~(\ref{ims1}) for the system~(\ref{dnplant}). And we are now in position to design a stabilizer for system~(\ref{dnplant}), time-varying internal model controller~(\ref{ims1})-(\ref{ims2}), and ESO~(\ref{obs}).
\subsection{Design of the stabilizer}\label{stabd}
The stabilize is designed for the augmented system of the time-varying internal model controller~(\ref{ims1})-(\ref{ims2}) and the nominal plant model~(\ref{dnplant}) as follows,
\begin{align}\label{augment}
\begin{split}
\begin{bmatrix}\xi_2(k+1)\\x_{\rm n}(k+1) \end{bmatrix} =&\begin{bmatrix}\Phi_2(k) & -\Psi_2(k)C_{\rm c}\\B_{\rm c}\Gamma_2(k)&A_{\rm c}- B_{\rm c}D_2(k)C_{\rm c} \end{bmatrix} \begin{bmatrix}\xi_2(k)\\x_{\rm n}(k) \end{bmatrix} +\begin{bmatrix}0\\B_{\rm c} \end{bmatrix}u_{\rm st}(k)\,.
\end{split}
\end{align}
System~(\ref{augment}) can be rewritten as~(\ref{augsys}) by providing the following lemma.
\begin{lemma}\label{stabilization}
To stabilize the augmented time-varying system~(\ref{augment}), it is sufficient to stabilize the following one
\begin{equation}\label{augsys}
\begin{array}{rcl}
x_{\rm o}(k+1)&=&F(k)x_{\rm o}(k)+Gu_{\rm st}(k)\,\\[1mm]
y_{\rm n}(k)&=&C_{\rm o}x_{\rm o}(k)\,,
\end{array}
\end{equation}
where
\begin{equation*}
F(k)=\begin{bmatrix} -\alpha(k)_{\rho\times1} & I_{(n-1)\times(n-1)}\\0_{(n-\rho)\times1}&0_{1\times(n-1)} \end{bmatrix}\,,\mbox{and }\,G=B_c\,,
\end{equation*}
with $\alpha(k)$ collecting the coefficients of the first column of exosystem in its observer canonical form, and $C_{\rm o}=\begin{bmatrix}1&0&\cdots&0\end{bmatrix}$.
\end{lemma}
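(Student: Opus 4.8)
The plan is to exhibit a time-varying, boundedly invertible state transformation (a Lyapunov transformation) $z(k)=T(k)\,[\,\xi_2(k)^{\top}\ x_{\rm n}(k)^{\top}\,]^{\top}$ that brings the closed loop~(\ref{augment}) into a block-triangular form whose only stabilization-relevant block is exactly~(\ref{augsys}), the complementary (possibly void) block being autonomous and uniformly stable. Since uniform exponential stabilizability is invariant under Lyapunov transformations, a stabilizing $u_{\rm st}$ designed for~(\ref{augsys}) then pulls back to a stabilizing feedback for~(\ref{augment}). The reason such a reduction should exist is structural: by Proposition~\ref{io} the internal-model triplet carries a copy of the plant dynamics $(A_{\rm c},B_{\rm c},C_{\rm c})$, and by Assumption~\ref{ctrl} the free motion $x(k+1)=A_{\rm c}x(k)$ is already Schur stable; these duplicated plant modes are precisely the ones that should decouple as a uniformly stable subsystem and can therefore be discarded for the purpose of stabilization, leaving the genuinely marginal, exosystem-induced dynamics to be handled.

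First I would write out the interconnection explicitly, using $u_r=C_{\rm c}x_{\rm n}$ (the identification $\xi_1\equiv x_{\rm n}$ permitted by $(\Phi_1,\Psi_1,\Gamma_1)=(A_{\rm c},B_{\rm c},C_{\rm c})$) and $u_0=u_{\rm im}+u_{\rm st}$, confirming that the system matrix $M(k)$ of~(\ref{augment}) is the feedback interconnection of $(\Phi_2,\Psi_2,\Gamma_2,D_2)$ with $(A_{\rm c},B_{\rm c},C_{\rm c})$ driven by $[\,0\ B_{\rm c}^{\top}\,]^{\top}$. Next I would bring the single-input pair $\bigl(M(k),[\,0\ B_{\rm c}^{\top}\,]^{\top}\bigr)$ into the discrete-time phase-variable canonical form of Ref.~\cite{1969Ra}. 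Here the content of the lemma enters: because the controller blocks $\Phi_2,\Gamma_2,D_2$ are built from the Sylvester equation~(\ref{se}) so as to immerse the exosystem~(\ref{exotime}), the characteristic coefficients produced by this canonicalization must coincide with the coefficients $\alpha(k)$ of $S(k)$ in its observer canonical form. This is exactly what places $-\alpha(k)$ in the first column of $F(k)$ and leaves the shift block $I$ together with $G=B_{\rm c}$, so that the extracted canonical pair is~(\ref{augsys}).

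The main obstacle, and where I expect to spend most of the effort, is the time-varying bookkeeping. Unlike the LTI case, the transformed matrix is $T(k+1)M(k)T(k)^{-1}$ rather than $TMT^{-1}$, so I must construct $T(k)$ from the immersion map $\Sigma(k)$ of Proposition~\ref{immersion} and the Sylvester solution $(q(k),p(k))$ of Lemma~\ref{l2} in a way that is compatible with the forward shift, and then verify that $T(k)$ and $T(k)^{-1}$ are \emph{uniformly} bounded so that the transformation is genuinely Lyapunov and stabilizability transfers. The second delicate point is to confirm that the subsystem split off by $T(k)$ --- the duplicated plant dynamics inherited from $A_{\rm c}$ --- is uniformly exponentially stable and couples into the $z$-coordinates only through a bounded term, so that uniform stability of~(\ref{augsys}) implies uniform stability of the full $(n+\rho-1)$-dimensional closed loop and not merely of its companion part. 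Once the Lyapunov property of $T(k)$ and the uniform stability of the split-off block are established, the identification of the canonical coefficients with $\alpha(k)$ makes the reduction to~(\ref{augsys}) immediate, and the lemma follows along the lines of Ref.~\cite{zhang2010}.
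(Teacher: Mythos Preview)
Your proposal is consistent with the paper's treatment: the paper itself does not give a self-contained proof but simply refers the reader to Lemma~3.1 of Ref.~\cite{zhang2014}, and your outline --- a Lyapunov (boundedly invertible, time-varying) change of coordinates that exploits the duplicated plant copy from Proposition~\ref{io} to split off a uniformly stable block and leave the $n$-dimensional companion pair $(F(k),G)$ carrying the exosystem coefficients $\alpha(k)$ --- is precisely the mechanism underlying that cited result, as you note in your closing sentence invoking Ref.~\cite{zhang2010}. In that sense you are supplying the argument the paper chose to outsource, and the route is the same one.

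One minor caution on your bookkeeping: the augmented state $(\xi_2,x_{\rm n})$ has dimension $(\rho-1)+n$, while the target system~(\ref{augsys}) has dimension $n$ (since $G=B_{\rm c}\in\R^{n}$), so the block you must split off is $(\rho-1)$-dimensional, not $n$-dimensional. Your narrative speaks of discarding ``the duplicated plant dynamics inherited from $A_{\rm c}$,'' which suggests an $n$-dimensional stable complement; in fact what gets absorbed is the internal-model state $\xi_2$ after it is combined with the controllable part of $x_{\rm n}$ through the Sylvester data $(q(k),p(k))$. This does not change the strategy, but when you actually build $T(k)$ you should expect the stable complement to have size $\rho-1$ and to be governed by the (Schur) dynamics inherited from the design of $(\Phi_2,\Psi_2)$, not by $A_{\rm c}$ directly.
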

\begin{proof}
The proof is referred to Lemma 3.1 of Ref.~\cite{zhang2014}.
\hfill $\square$
\end{proof}

Note that system~(\ref{augsys}) can be split as
\begin{equation}\label{splitsys}
\begin{array}{rcl}
x_{\rm o1}(k+1)&=&F_{11}(k)x_{\rm o1}(k)+F_{12}(k)x_{\rm o1}(k)+G_1u_{\rm st}(k)\,\\[1mm]
x_{\rm b}(k+1)&=&F_{21}(k)x_{\rm b}(k)+F_{22}(k)x_{\rm b}(k)+G_2u_{\rm st}(k)\,\\[1mm]
y_2(k)&=&x_{\rm o1}(k)\,,
\end{array}
\end{equation}
where $x_{\rm o1}\in\R$ is the first state of $x_{\rm o}$, and $x_{\rm b}\in\R^{n-1}$ collects the rest $(n-1)$ states of $x_{\rm o}$. 
We introduce a reduced order observer of the state $x_{\rm b}$ as follows:
\begin{equation}\label{ob}
\begin{array}{rcl}
\hat z(k+1)&=&(F_{22}-HF_{12})\hat z(k)+(G_2-HG_1)u_{\rm st}(k)\\[1mm]
&&+((F_{22}-HF_{12})H+F_{21}(k)-HF_{11}(k))y_2(k)\\[1mm]
\hat x_{\rm b}(k)&=&\hat z(k)+Hy_2(k)\,,
\end{array}
\end{equation}
where $H$ is the output injection gain of the observer. With the estimation $\hat x_{\rm b}(k)$, one can stabilize system~(\ref{splitsys}) via the following stabilizer
\begin{equation*}
u_{\rm st}=\begin{bmatrix}K_1(k) & K_2(k)\end{bmatrix}\begin{bmatrix}x_{\rm o1}(k)\\ \hat x_{\rm b}(k)\end{bmatrix}\,.
\end{equation*}
Hence, the remaining task is to make the close-loop system stable
\begin{equation}\label{stabsys}
x_{\rm o}(k+1)=(F(k)+GK(k))x_{\rm o}(k),
\end{equation}
with a stabilizer gain $K(k)=\begin{bmatrix}K_1(k) & K_2(k)\end{bmatrix}\in \begin{bmatrix}\R & \R^{1\times(n-1)}\end{bmatrix}$. To solve $K(k)$, we assume that $F(k)$ has the following property.
\begin{assumption}
\label{polyt}
The matrix $F(k)$ belongs to a polytype, that is
\begin{equation}\label{poly}
F(k)=F(\sigma(k))=\sum\limits_{i=1}^N\sigma_i(k)F_i\,,
\end{equation}
where
\begin{equation*}
\sigma_i(k)\ge0,~~\sum\limits_{i=1}^N\sigma_i(k)=1\,,
\end{equation*}
and $F_i$'s are constant matrices.
\end{assumption}

The following Lemma is provided to calculate the parameters of $K(k)$.
\begin{lemma}\label{polytype}
If there exist symmetric matrices $Q_i>0,~Q_j>0$, matrices $M_i,~T_i$, constant $\gamma>1$, and the following matrix inequalities are solvable
\begin{equation}\label{lmi}
\begin{bmatrix} -M_i-M_i^{\top} +Q_i & * & * & *\\-(F_iM_i+BT_i) & -Q_j & * & *\\C_{\rm o}&0&-I& *\\C_{\rm o}&0&0&-(\gamma^2-1)I\end{bmatrix}<0\,,
\end{equation}
for all $i=1,2,\cdots,N$ and $j=1,2,\cdots,N$, then the feedback gain to stabilize the augment system~(\ref{stabsys}) is $K(k)=\sum_{i=1}^N\sigma_i(k)T_iM_i^{-1}$.
\end{lemma}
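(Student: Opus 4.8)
The plan is to certify stability of the closed loop (\ref{stabsys}) by means of a \emph{parameter-dependent} (polytopic) Lyapunov function, and to recover the linear form (\ref{lmi}) from the underlying synthesis inequality through the slack-variable (Finsler/de~Oliveira) technique, so that the scheduled gain $K(k)=\sum_{i=1}^N\sigma_i(k)T_iM_i^{-1}$ falls out directly from the LMI variables. Concretely, I would posit the candidate $V(k)=x_{\rm o}(k)^{\top}P(k)x_{\rm o}(k)$ with $P(k)=\big(\sum_{i=1}^N\sigma_i(k)Q_i\big)^{-1}>0$, which is well defined under Assumption~\ref{polyt}, and aim to establish the strict decrease $A_{\rm cl}(k)^{\top}P(k+1)A_{\rm cl}(k)-P(k)<0$ for the closed-loop matrix $A_{\rm cl}(k)=F(k)+GK(k)$, writing $G=B$ for the common input matrix appearing in (\ref{lmi}).

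First I would Schur-complement out the two performance block rows/columns of (\ref{lmi}), i.e.\ those carrying $C_{\rm o}$, $-I$ and $-(\gamma^2-1)I$, using that the lower-right $2\times2$ block is negative definite. Collecting the $C_{\rm o}$ contributions, each vertex inequality reduces to the core slack-variable stability form
\[
\begin{bmatrix} M_i+M_i^{\top}-Q_i & (F_iM_i+BT_i)^{\top}\\ F_iM_i+BT_i & Q_j\end{bmatrix}>0 ,
\]
carrying in its $(1,1)$ block an additional negative term $-\tfrac{\gamma^2}{\gamma^2-1}C_{\rm o}^{\top}C_{\rm o}$; since dropping this term only relaxes the inequality, the displayed pure-stability LMI holds a fortiori, while $\gamma$ quantifies an $H_\infty$-type bound on the performance output $y_{\rm n}=C_{\rm o}x_{\rm o}$ (hence on the additive uncertainty $\Delta(k)$). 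Next I would form the convex combination of these vertex inequalities, weighting index $i$ by $\sigma_i(k)$ and index $j$ by $\sigma_j(k+1)$; this yields a parameter-dependent inequality whose $(1,1)$ block is $\mathcal M(k)+\mathcal M(k)^{\top}-\sum_i\sigma_i(k)Q_i$, whose $(2,2)$ block is $\sum_j\sigma_j(k+1)Q_j=P(k+1)^{-1}$, and whose off-diagonal is $A_{\rm cl}(k)\mathcal M(k)$ with $\mathcal M(k)=\sum_i\sigma_i(k)M_i$. Finally, applying the congruence bound $\mathcal M^{\top}S^{-1}\mathcal M\ge \mathcal M+\mathcal M^{\top}-S$ with $S=\sum_i\sigma_i(k)Q_i=P(k)^{-1}>0$, together with a Schur complement on the $(2,2)$ block and cancellation of the invertible factor $\mathcal M(k)$, recovers $A_{\rm cl}(k)^{\top}P(k+1)A_{\rm cl}(k)-P(k)<0$, making $V$ strictly decreasing and thus certifying asymptotic stability of (\ref{stabsys}).

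The delicate point, and the step I expect to be the main obstacle, is the off-diagonal block of the convex combination. The vertex inequalities supply the affine sum $\sum_i\sigma_i(k)(F_iM_i+BT_i)$, whereas the closed loop produces $A_{\rm cl}(k)\mathcal M(k)=F(k)\mathcal M(k)+BK(k)\mathcal M(k)$, and the product $F(k)\mathcal M(k)=\sum_{i,l}\sigma_i(k)\sigma_l(k)F_iM_l$ carries cross terms $i\neq l$ that are absent from the vertex sum. Matching the two therefore forces the slack matrix feeding the gain to be shared, i.e.\ $M_i\equiv M$, so that $K(k)\mathcal M(k)=\sum_i\sigma_i(k)T_i$ and $A_{\rm cl}(k)M=\sum_i\sigma_i(k)(F_iM+BT_i)$ combine affinely and $K(k)=\sum_i\sigma_i(k)T_iM^{-1}$ is exactly reconstructed; reconciling the index-dependent notation $M_i$ in (\ref{lmi}) with this affine reconstruction (or, absent a common slack, invoking a multi-simplex/P\'olya relaxation to dominate the cross terms) is the crux of the argument. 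Once this algebraic matching is secured, the remaining Schur and congruence manipulations are routine.
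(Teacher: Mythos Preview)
Your overall strategy---a parameter-dependent Lyapunov function together with the de~Oliveira/Daafouz slack variable---is the same as the paper's, which likewise posits $V(x_{\rm o},\sigma)=x_{\rm o}^{\top}P(\sigma)x_{\rm o}$, derives the decrease condition in Schur form, and then invokes the Daafouz--Bernussou result to pass between the vertex LMIs~(\ref{lmi}) and the parameter-dependent inequality. The paper does, however, keep the full $H_\infty$ computation rather than discarding the $C_{\rm o}$ rows: with $e=C_{\rm o}x_{\rm o}+\Delta$ it establishes $V(k+1)-V(k)+e^{\top}e-\gamma^2\Delta^{\top}\Delta<0$ and concludes $|e|<\gamma|\Delta|$, which is the performance statement actually used downstream, not bare stability.

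The genuine gap is your ``cross-term obstacle''. You first combine the slack-variable LMIs, forming $\mathcal M(k)=\sum_i\sigma_i(k)M_i$, and then try to match $A_{\rm cl}(k)\mathcal M(k)$ against the affine sum $\sum_i\sigma_i(k)(F_iM_i+BT_i)$, concluding that the slack must be common. This difficulty is an artifact of the wrong order of operations. The argument implicit in the reference the paper cites eliminates $M_i$ at each vertex \emph{before} combining: the $(1,1)$ block of~(\ref{lmi}) makes each $M_i$ invertible; setting $K_i:=T_iM_i^{-1}$, $A_i^{\rm cl}:=F_i+BK_i$, and $P_i:=Q_i^{-1}$, the bound $M_i+M_i^{\top}-Q_i\le M_i^{\top}Q_i^{-1}M_i$ followed by congruence with $\mathrm{diag}(M_i^{-1},I)$ and then $\mathrm{diag}(I,P_j)$ reduces the core stability block to
\[
\begin{bmatrix}P_i & (A_i^{\rm cl})^{\top}P_j\\ P_jA_i^{\rm cl} & P_j\end{bmatrix}>0,
\]
which is \emph{affine} in $P_i$, $P_j$ and $A_i^{\rm cl}$ separately. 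Convex combination over $i$ (weight $\sigma_i(k)$) and $j$ (weight $\sigma_j(k+1)$) now produces no cross terms, because the closed loop $A_{\rm cl}(k)=\sum_i\sigma_i(k)A_i^{\rm cl}$ is itself polytopic in the vertex closed loops; Schur then gives $P(\sigma(k))>A_{\rm cl}(k)^{\top}P(\sigma(k+1))A_{\rm cl}(k)$. Hence $M_i$ need not be common, and the scheduled gain $K(k)=\sum_i\sigma_i(k)T_iM_i^{-1}$ is exactly the right reconstruction without any multi-simplex relaxation. A related slip: the working PDLF is $P(\sigma)=\sum_i\sigma_iQ_i^{-1}$, not $\bigl(\sum_i\sigma_iQ_i\bigr)^{-1}$ as you posited; with the latter the Jensen inequality for matrix inversion points the wrong way and the decrease does not follow.
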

\begin{proof}
Consider a Lyapunov function candidate
\begin{equation}\label{lya}
V(x_{\rm o}(k),\sigma(k))=x^{\top}_{\rm o}P(\sigma(k))x_{\rm o}(k),
\end{equation}
where $P(k)$ is a symmetric positive-definite matrix for the close-loop system~(\ref{stabsys}), and denote $A_{\rm st}(k):=F(k)+GK(k)$. To make the presentation concise, we drop index $k$, while keeping $k+1$ in the following notations. Since the reference plays no role in stabilization, the reference $r(k)$ is set as zero, and $e=C_{\rm o}x_{\rm o}+\Delta$ is yielded according to~(\ref{dnplant}) and~(\ref{augsys}). Then we have
\begin{equation}\label{dv}
\begin{split}
&V(x_{\rm o}(k+1),\sigma(k+1))-V(x_{\rm o},\sigma)+e^{\top}e-\gamma^2\Delta^{\top}\Delta\\[2mm]
&=\begin{bmatrix} x_{\rm o}\\\Delta\end{bmatrix}^{\top}\begin{bmatrix} X & C_{\rm o}^{\top}\\C_{\rm o} & I-\gamma^2I\end{bmatrix}\begin{bmatrix} x_{\rm o}\\\Delta\end{bmatrix}\,,
\end{split}
\end{equation}
with $X:=-P+A_{\rm st}^{\top}P(k+1)A_{\rm st}+C_{\rm o}^{\top}C_{\rm o}$. To make~(\ref{dv}) negative definite, it is equivalent to say that
\begin{equation}\label{dv2}
-P+A_{\rm st}^{\top}P(k+1)A_{\rm st}+C_{\rm o}^{\top}C_{\rm o}+C_{\rm o}^{\top}(\gamma^2I-I)^{-1}C_{\rm o}<0.
\end{equation}
Meanwhile, note that
\begin{equation}\label{dv3}
\begin{split}
-P+A_{\rm st}^{\top}P(k+1)A_{\rm st}
=-P+A_{\rm st}^{\top}P^{\top}(k+1)P^{-1}(k+1)P(k+1)A_{\rm st}\,,
\end{split}
\end{equation}
hence one can take the Schur complement on~(\ref{dv2}) and rewrite the LMI as
\begin{equation}\label{lmi2}
\begin{bmatrix} -P& * & * & *\\-P(k+1)A_{\rm st} & -P(k+1) & * & *\\C_{\rm o}&0&-I& *\\C_{\rm o}&0&0&-(\gamma^2-1)I\end{bmatrix}<0\,.
\end{equation}
Noting that $Q=P^{-1}$, the above matrix inequalities read as
\begin{equation}\label{lmi3}
\begin{bmatrix} -Q^{-1}& * & * & *\\-Q^{-1}(k+1)A_{\rm st} & -Q^{-1}(k+1) & * & *\\C_{\rm o}&0&-I& *\\C_{\rm o}&0&0&-(\gamma^2-1)I\end{bmatrix}<0\,.
\end{equation}
which implies inequality~(\ref{lmi}) by using the results in Ref.~\cite{daafouz2001}. Recall~(\ref{dv}), we obtain that
\begin{equation}\label{dv4}
V(x_{\rm o}(k+1),\sigma(k+1))-V(x_{\rm o},\sigma)+e^{\top}e-\gamma^2\Delta^{\top}\Delta<0\,.
\end{equation}
Summating on both sides of the above inequality, it is seen that $|e|<\gamma|\Delta|$, which completes the proof. \hfill $\square$
\end{proof}

With the internal model controller in Section~\ref{imc}, the ESO in Section~\ref{esod}, and the stabilizer in Section~\ref{stabd}, we are in position to present the robust TV-IMPC.

\subsection{Design of robust TV-IMPC}
The robust TV-IMPC is stated in the following theorem.
\begin{theorem}
If Assumption~\ref{polyt} holds, then the tracking error $e(k)$ of system~(\ref{exotime})-(\ref{dplant}) satisfies
\[
\lim_{k\to\infty}(e(k)-\Delta(k))=0\,,
\]
with $\Delta(k)$ being defined in~(\ref{nest2}), by applying the following TV-IMPC controller,
\begin{align*}\label{stab}
\xi_1(k+1)=&A_{\rm c}(k)\xi_1(k)+B_{\rm c}(k)u_0(k)\\[1mm]
u_r(k)=&C_{\rm c}(k)\xi_1(k)\\[3mm]
\xi_2(k+1)=&\Phi_2(k)\xi_2(k)+\Psi_2(k)(-u_r(k))\\[1mm]
u_{\rm im}(k)=&\Gamma_2(k)\xi_2(k)+D_2(k)(-u_r(k))\\[3mm]
\hat z(k+1)=&(F_{22}(k)-HF_{12}(k))\hat z(k)+(G_2-HG_1)u_{\rm st}(k)+\\[1mm]
&((F_{22}(k)-HF_{12}(k))H+F_{21}(k)-HF_{11}(k))e(k)\\[1mm]
\hat x_{\rm b}(k)=&\hat z(k)+He(k)\\[1mm]
u_{\rm st}(k)=&K_1(k)e(k)+K_2(k)\hat x_{\rm b}(k)\\[3mm]
\hat x(k+1)=&A_{\rm c}\hat x(k)+B_{\rm c}u(k)+E_{\rm c}\hat d_l(k)-L_1(\hat y(k)-y(k))\\[1mm]
\hat d_l(k+1)=&\hat d_l(k)-L_2(\hat y(k)-y(k))\\[1mm]
\hat y(k)=&C_{\rm c}\hat x(k)\\[3mm]
u_0(k)=&u_{\rm im}(k)+u_{\rm st}(k)\\[1mm]
u(k)=&u_0(k)-\dfrac{\hat d_l(k)}b\,,
\end{align*}
where $(\Phi_{2}(\cdot),~\Psi_{2}(\cdot),~\Gamma_{2}(\cdot),~D_{2}(\cdot))$ is obtained by solving the Sylvester equation~(\ref{se}); $F$ and $G$ are defined in~(\ref{augsys}); $H$, $L_1$ and $L_2$ are observer gains;  and $K(k)$ is obtained by solving the LMIs~(\ref{lmi}).
\end{theorem}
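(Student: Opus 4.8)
The plan is to chain the three modules built in Section~\ref{design} --- the gray-box ESO, the time-varying internal model, and the LMI stabilizer --- into a single error-coordinate argument, reducing the claim to the asymptotic stability delivered by Lemma~\ref{polytype}. First I would invoke the estimation-error analysis of Section~\ref{est} to dispose of the uncertainty: the ESO feedback $u_d(k)=\hat d_l(k)/b$ turns the uncertain plant~(\ref{dplant}) into the nominal plant~(\ref{dnplant}) plus the additive residual $\Delta(k)=y(k)-y_{\rm n}(k)=-C_{\rm c}\varepsilon(k)$, whose boundedness is given by~(\ref{ert4}). This yields the identity $e(k)-\Delta(k)=y_{\rm n}(k)-r(k)$, so the theorem reduces to showing that the \emph{nominal} regulation error $y_{\rm n}(k)-r(k)$ tends to zero.

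Next I would install the internal-model mechanism on the nominal plant. With the nominal triplet choice~(\ref{ims1d}), Proposition~\ref{immersion} supplies a smooth mapping $\Sigma(\cdot)$ satisfying~(\ref{map2}) and hence a zero-error manifold $\xi(k)=\Sigma(k)w(k)$ on which the internal model exactly reproduces the feedforward $u_{\rm ff}(k)=R(k)w(k)$ that enforces $y_{\rm n}=r$. Introducing deviation coordinates of the internal-model and nominal-plant states about this manifold, and using Lemma~\ref{stabilization} to recast the augmented system~(\ref{augment}) into the compact form~(\ref{augsys}), the closed loop in error coordinates becomes the system~(\ref{stabsys}) whose regulated output is $y_{\rm n}-r=C_{\rm o}x_{\rm o}$.

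I would then close the stabilization step. The reduced-order observer~(\ref{ob}) reconstructs the unmeasured block $x_{\rm b}$, so that the implementable feedback $u_{\rm st}=K_1e+K_2\hat x_{\rm b}$ realizes the intended state feedback $u_{\rm st}=K(k)x_{\rm o}$ modulo an observer error that decays by the choice of $H$. Under Assumption~\ref{polyt} and solvability of the LMIs~(\ref{lmi}), Lemma~\ref{polytype} then furnishes $K(k)=\sum_{i=1}^N\sigma_i(k)T_iM_i^{-1}$ rendering~(\ref{stabsys}) asymptotically stable and enforcing the gain bound $|e|<\gamma|\Delta|$. Asymptotic stability of the error coordinates forces $x_{\rm o}\to0$, whence $y_{\rm n}-r=C_{\rm o}x_{\rm o}\to0$, i.e.\ $e-\Delta\to0$; combined with $\lim_{k}|\Delta(k)|\le\alpha$ this verifies Definition~\ref{pbm}.

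The step I expect to be the main obstacle is reconciling the \emph{asymptotic} regulation claim with the fact that the controller is driven by the true error $e=(y_{\rm n}-r)+\Delta$ rather than the nominal error $y_{\rm n}-r$: the residual $\Delta$ re-enters the error-coordinate dynamics as a persistent, merely bounded exogenous input, so~(\ref{stabsys}) is in fact forced by $\Delta$ and one cannot conclude $x_{\rm o}\to0$ naively. The $\gamma$-bound of Lemma~\ref{polytype} is exactly what guarantees robust $\ell_2$-stability of this forced system; the delicate point is to separate the vanishing part of $y_{\rm n}-r$ from the $\Delta$-driven bounded part and to absorb the latter consistently into the definition of $\Delta$, rather than assuming $\Delta\to0$. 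I would handle this by propagating the boundedness and the $\ell_2$-gain estimate through the stabilized error dynamics while keeping the internal model responsible for cancelling the exosystem-generated component of the steady-state input.
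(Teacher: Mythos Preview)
Your plan is exactly the paper's own argument: the paper's proof is a single sentence that simply cites Lemma~\ref{l2}, Lemma~\ref{stabilization}, Lemma~\ref{polytype}, Proposition~\ref{immersion}, the nominal system~(\ref{dnplant}), and the reduced-order observer~(\ref{ob}), and your outline is nothing more than a fleshed-out chaining of those same ingredients in the same order (ESO $\Rightarrow$ nominal plant; immersion $\Rightarrow$ zero-error manifold; Lemma~\ref{stabilization} $\Rightarrow$ compact form; observer + LMI stabilizer $\Rightarrow$ convergence).

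The one place you go beyond the paper is the last paragraph, and you are right to flag it: the stabilizer and the reduced-order observer are fed by the measured error $e=(y_{\rm n}-r)+\Delta$, not by the nominal error $y_{\rm n}-r$, so the autonomous model~(\ref{stabsys}) is, strictly speaking, forced by the bounded residual $\Delta$. The paper's proof does not spell this out either; it simply invokes Lemma~\ref{polytype}, whose conclusion is the $\ell_2$-gain bound $|e|<\gamma|\Delta|$ rather than $x_{\rm o}\to 0$. So the subtlety you identify is genuine, and your proposed handling --- separate the exosystem-generated component (killed by the internal model) from the $\Delta$-driven component (controlled by the $\gamma$-bound), and fold the latter back into the definition of the residual --- is the natural repair. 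Just be aware that this yields ultimate boundedness of $e-\Delta$ at the scale of $\Delta$, not literal convergence to zero unless $\Delta$ itself decays; the paper's terse proof is silent on this distinction and effectively absorbs it into the same equivalence class of ``bounded additive uncertainty'' captured in Definition~\ref{pbm}.
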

\begin{proof}
The proof immediately follows by noting Lemma~\ref{l2},~\ref{stabilization} and~\ref{polytype}, Proposition~\ref{immersion}, the nominal system~(\ref{dnplant}), and the reduced order observer~(\ref{ob}).

\hfill $\square$
\end{proof}
\section{Simulation and experimental study}\label{exp}
To validate the tracking performance of the proposed robust design of the TV-IMPC method, simulation and experimental results are provided in this section. Specifically, the reference signal is generated by a time-varying exosystem as follows
\begin{equation*}\label{gd}
\begin{array}{rcl}
w(k+1)&=&\begin{bmatrix}1 &T_s(1+0.5\sin(2\pi t))\\T_s(-1+0.5\sin(5t)) & 1\end{bmatrix}w(k)\\[6mm]
r(k)&=&\begin{bmatrix}\lambda & 0\end{bmatrix}w(k)\,,
\end{array}
\end{equation*}
where the sampling interval $T_s=1{\rm ms}$, a constant gain $\lambda$, and two time-varying coefficients in the exosystem are with an irrational ratio ($\omega_1=2\pi$ and $\omega_2=5$).

The plot of the reference $r(k)$ ($\lambda=1$) is illustrated in Figure~\ref{re}. Indeed, it is not periodic and its amplitude varies in each cycle.
\begin{figure}[!ht]
  \centering
  \includegraphics[width=0.8\hsize]{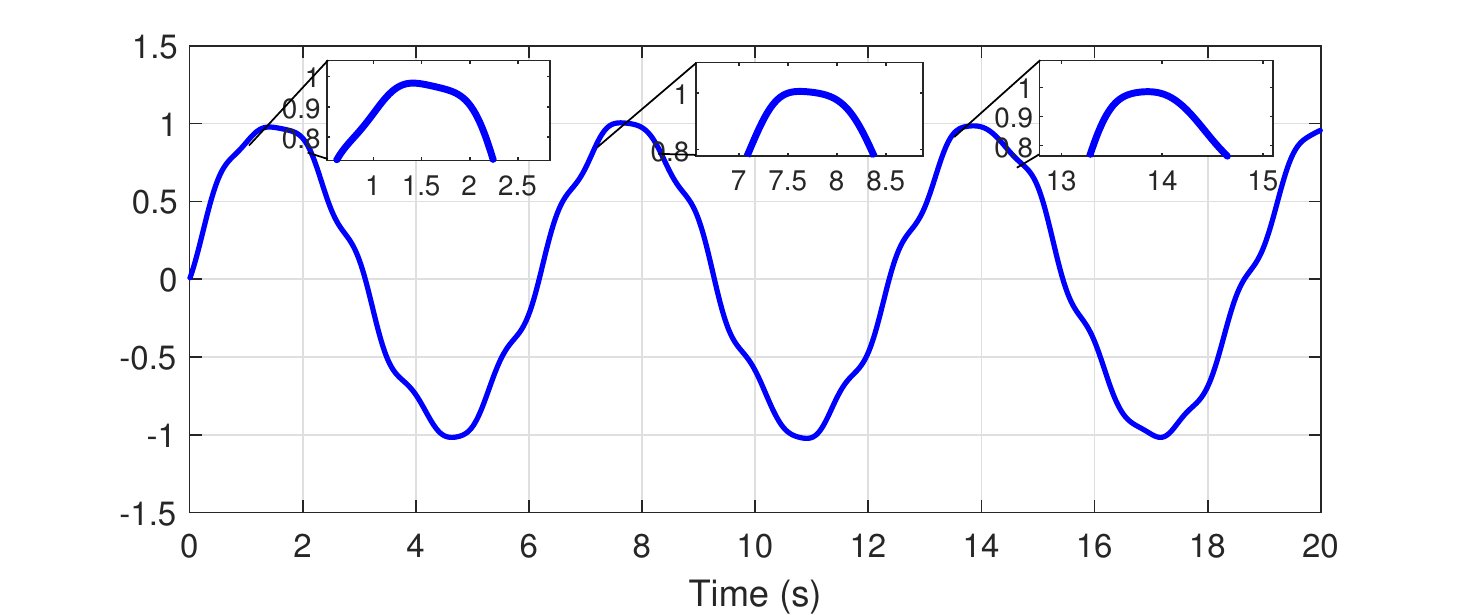}\\
  \caption{The plot of the reference $r(k)$.}\label{re}
\end{figure}
\subsection{Simulation study}
The simulations are conducted in MATLAB$^\text{TM}$ Simulink. The parameters of the plant model~(\ref{dplant}) are defined as follows
\begin{equation}\label{expxd}
\begin{split}
A_{\rm c}=&\begin{bmatrix}0 &1\\-0.9613 & 1.9404\end{bmatrix}\\[1mm]
B_{\rm c}=&\begin{bmatrix}0&1\end{bmatrix}^{\top}\\[1mm]
E_{\rm c}=&\begin{bmatrix}0&4.96{\rm e}^{-5}\end{bmatrix}^{\top}\\[1mm]
C_{\rm c}=&\begin{bmatrix}0.0098& 0.0099\end{bmatrix}\,.
\end{split}
\end{equation}
The dynamic parameters of the internal model controllers are calculated in real-time by equation~(\ref{se})-(\ref{cp}) with $K=\begin{bmatrix}-107.11&-69.37\end{bmatrix}$ solved by the LMI~(\ref{lmi}); the observer gains are $H=1{\rm e}^{-4}$, $L_1=\begin{bmatrix}96.71& 114.20\end{bmatrix}^{\top}$, and $L_2=2.75{\rm e}^{4}$.

To validate the  tracking performance of the robust TV-IMPC, the first simulation is conducted using a completely-known plant model, which indicates that $d_l(k)=0$. In the simulation results, it is seen from Figure~\ref{ideal} that the proposed method is able to achieve asymptotic tracking performance, and the Root-Mean-Square Error (RMSE) is of order $10^{-16}$ (at the precision of floating numbers in MATLAB).
\begin{figure}[!ht]
  \centering
  \includegraphics[width=0.8\hsize]{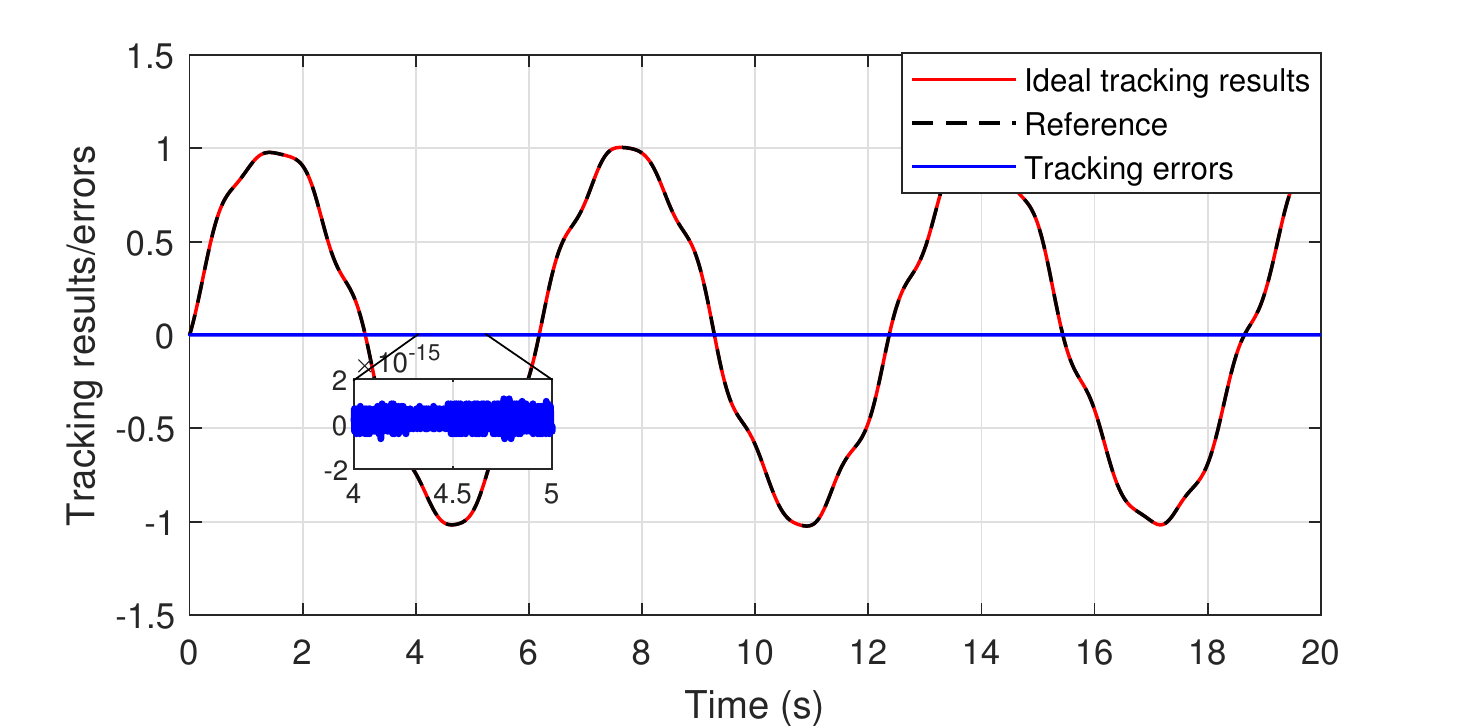}\\
  \caption{Simulation results of the proposed robust TV-IMPC without model uncertainties and external disturbances.}\label{ideal}
\end{figure}

To further validate the robustness of the proposed method, the signal $d_l(k)$ is introduced of the following form
\begin{equation}\label{d}
{d_l(k)=k_1\sin(k_2x^2_1(k)x_2(k))+n(k)\,,}
\end{equation}
where the first term of $d_l(k)$ represents the unmodeled dynamics with $k_1 = 1{\rm e}^3\sin(2\pi kT_s)$ and $k_2 = 1{\rm e}^{-4}$; and $n(k)$ is a square-wave-form noise signal with amplitude $1{\rm e}^{-2}$ and frequency $4\pi$. We investigate the effectiveness of our robust controller design by comparing the tracking errors with and without ESO. From the simulation results shown in Figure~\ref{fr}(a), it is seen that the tracking errors are significantly reduced, and the RMSE with ESO reaches $7.01{\rm e}^{-6}$, while the RMSE without ESO is $5.19{\rm e}^{-5}$. Moreover, Figure~\ref{fr}(b) shows the estimation results of the unknown term $d_l(k)$, validating that the ESO is able to achieve real-time estimation of $d_l(k)$, and the relative error of the estimation is $2.04\%$.
\begin{figure}[!ht]
	\centering
	\setcounter {subfigure} {0} \tiny (a){
		\includegraphics[scale=0.5]{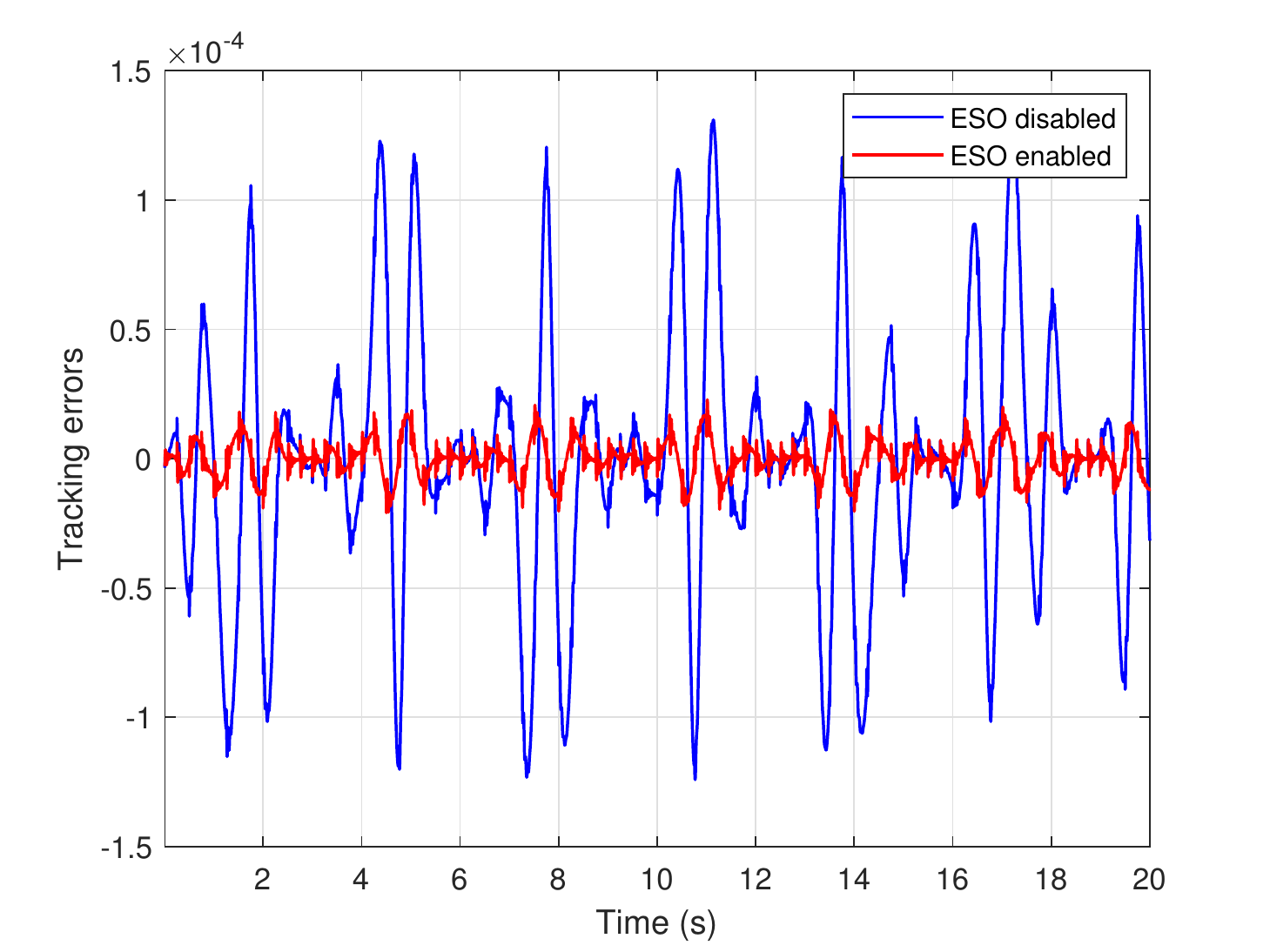}
        }
	\setcounter {subfigure} {0} \tiny (b){
		\includegraphics[scale=0.5]{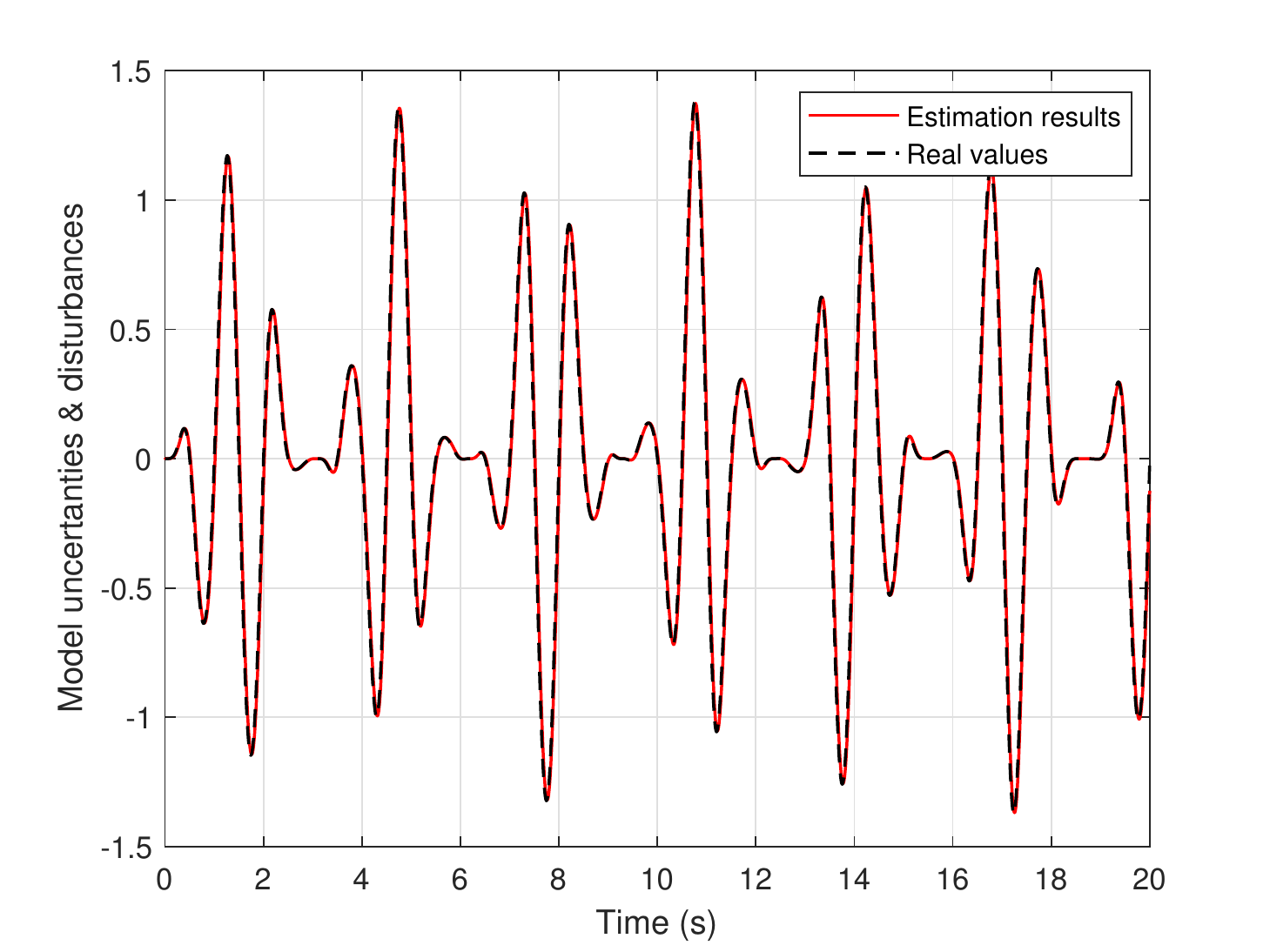}
        }
	\caption{Simulation results of the proposed robust TV-IMPC with model uncertainties and external disturbances. (a): Tracking errors (ESO enabled/disabled); (b): Estimation results of model uncertainties and disturbances.}
	\label{fr}
\end{figure}

Moreover, as the ESO gain $L_2$ is increased to $L_2=1.02{\rm e}^{6}$ (and $L_1$ is calculated as $\begin{bmatrix}100.52& 305.26\end{bmatrix}^{\top}$), it is found that the tracking errors are significantly decreased (RMS $1.62{\rm e}^{-6}$) than that of  the original case, as shown in Figure~\ref{ll2}. This  result validates that the tracking precision of the proposed robust TV-IMPC can be improved by adjusting the observer gains.
\begin{figure}[!ht]
  \centering
  \includegraphics[width=0.8\hsize]{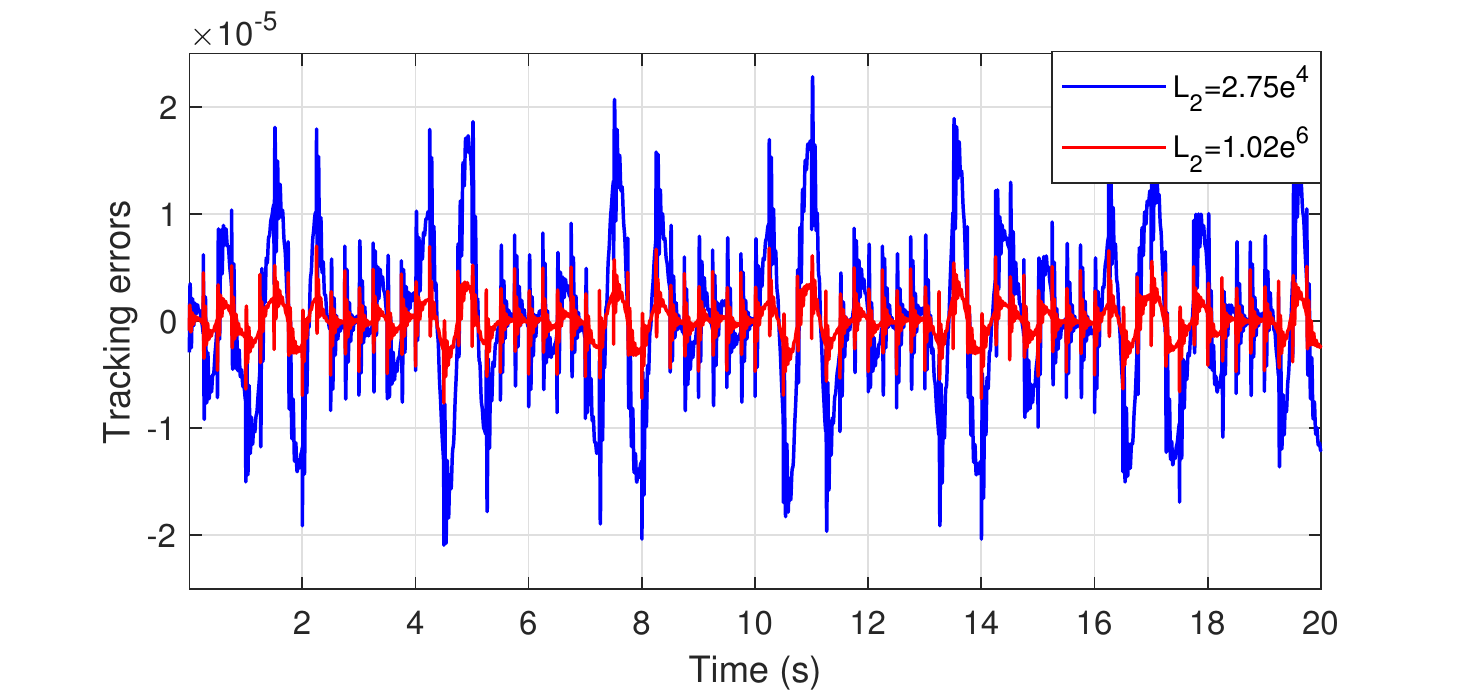}\\
  \caption{Simulation results of the tracking errors with different ESO gain $L_2$.}\label{ll2}
\end{figure}

Another comparative simulation is conducted by replacing the gray-box ESO with the black-box one, and the results are shown in Figure~\ref{bl}. It is noticed that the system becomes unstable with the same controller parameters, because the output saturation occurs. The above results verify that the information of plant dynamics should be considered in the ESO design to make the system easier to stabilize. Therefore, the gray-box ESO is more suitable for achieving the robustness of the TV-IMPC.
\begin{figure}[!ht]
  \centering
  \includegraphics[width=0.8\hsize]{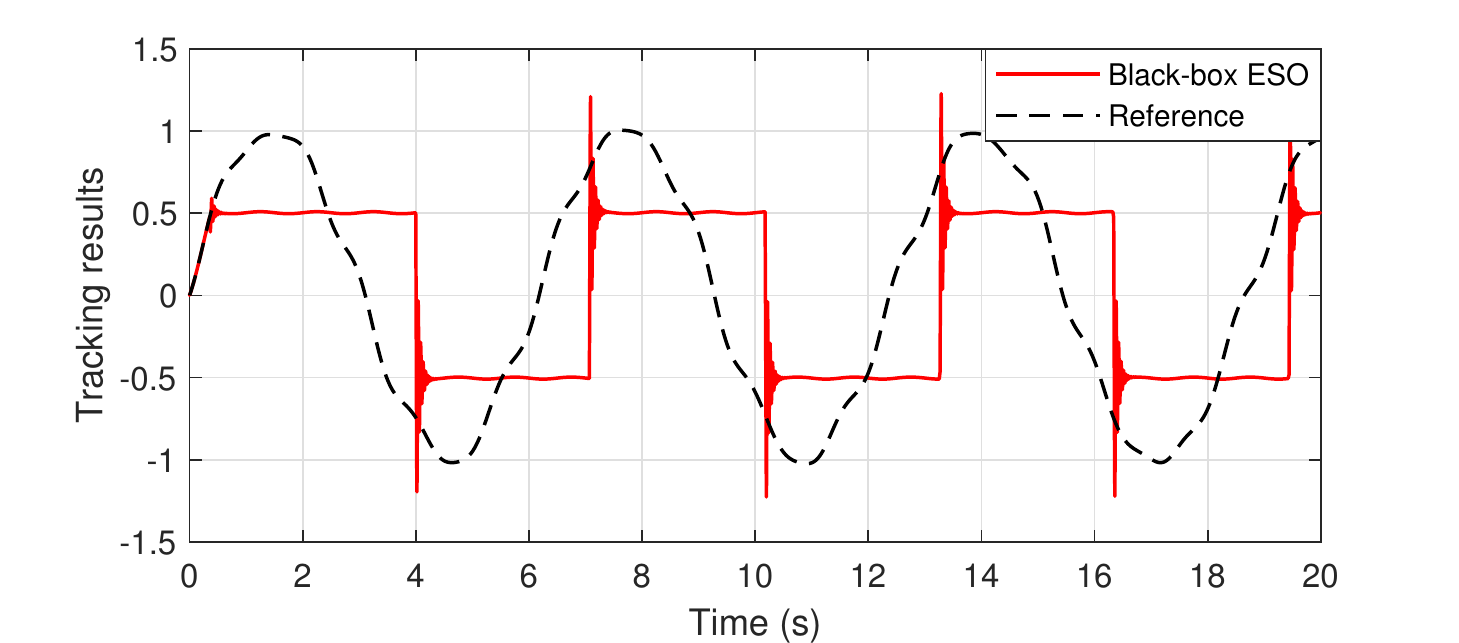}\\
  \caption{Simulation results of the proposed robust TV-IMPC with black-box ESO.}\label{bl}
\end{figure}
\subsection{Experimental study}
\begin{figure}[!ht]
  \centering
  \includegraphics[width=\hsize]{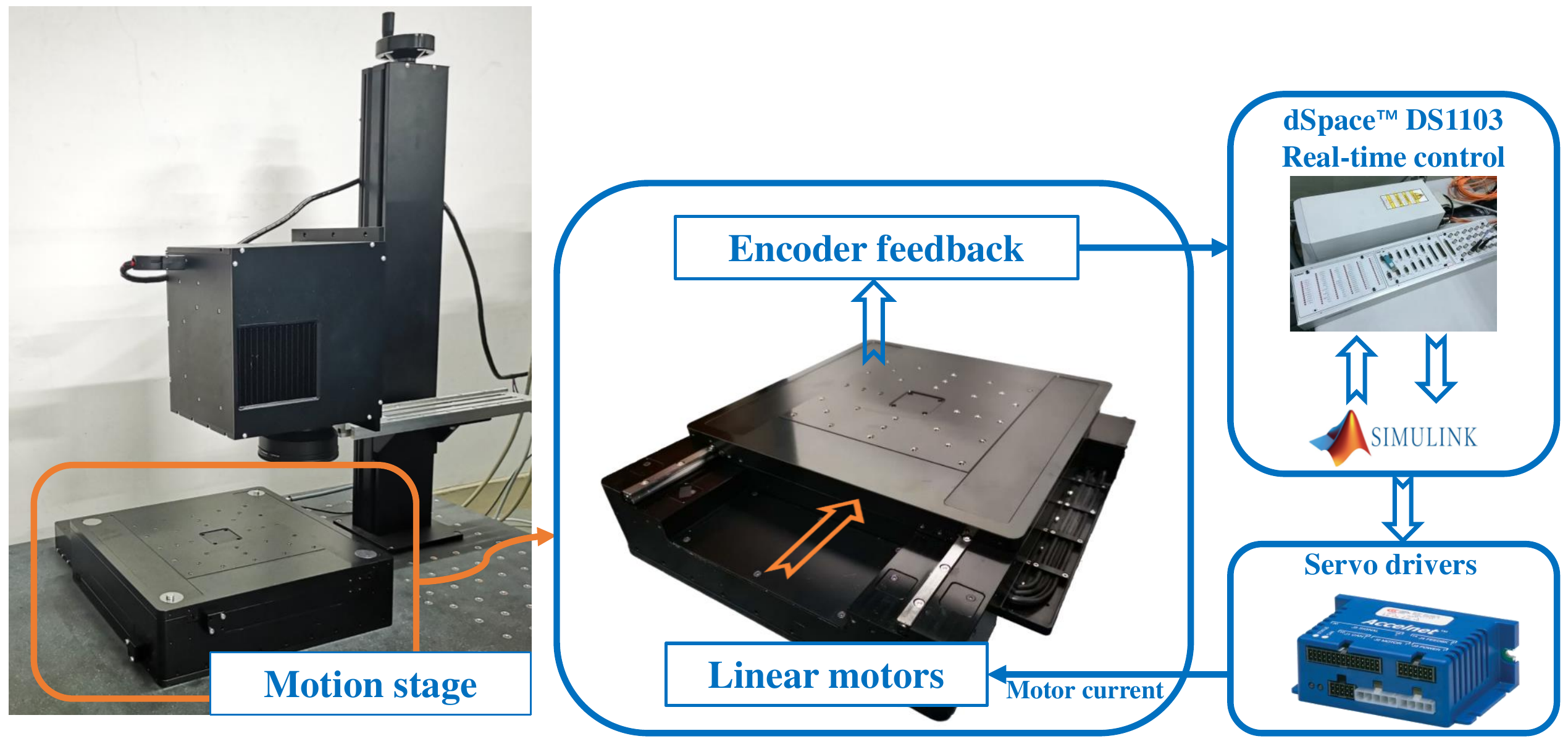}\\
  \caption{The experimental setup.}\label{stage}
\end{figure}
To demonstrate the feasibility of the proposed robust TV-IMPC design  in practice, experiments are implemented on the upper axis of a self-developed bi-axial direct-drive servo stage shown in Figure~\ref{stage}. The detailed parameters of the system are listed in Table.~\ref{stagep}. A dSPACE$^\text{TM}$ 1103 rapid prototyping control system  are used for controller implementation and real-time control executions.
\begin{table}[!ht]
  \centering
     \makeatletter\def\@captype{table}\makeatother\caption{Technical parameters of the servo stage.}
  \begin{tabular}{lccc}
    \toprule
    \tabincell{c}{\textbf{Actuators}} & Akribis AUM2-S-S4 \\
    \tabincell{c}{\textbf{Servo drivers}} & Copley Accelnet ACJ-090-12 \\
    \tabincell{c}{\textbf{Cross roller guides}} & NB SV4360-35Z \\
    \tabincell{c}{\textbf{Max. stroke}} & $\pm100$mm \\
    \tabincell{c}{\textbf{Max. velocity}}   & $300$mm/sec \\
    \tabincell{c}{\textbf{Max. acceleration}}   & $0.5$g\\
    \tabincell{c}{\textbf{Optical encoders}}   & Renishaw T1011-15A\\
    \tabincell{c}{\textbf{Encoder resolution}}   & $10$nm\\
    \bottomrule
  \end{tabular}
   \label{stagep}
\end{table}
\subsubsection{Modeling of the servo stage}
\begin{figure}[!ht]
  \centering
  \includegraphics[width=\hsize]{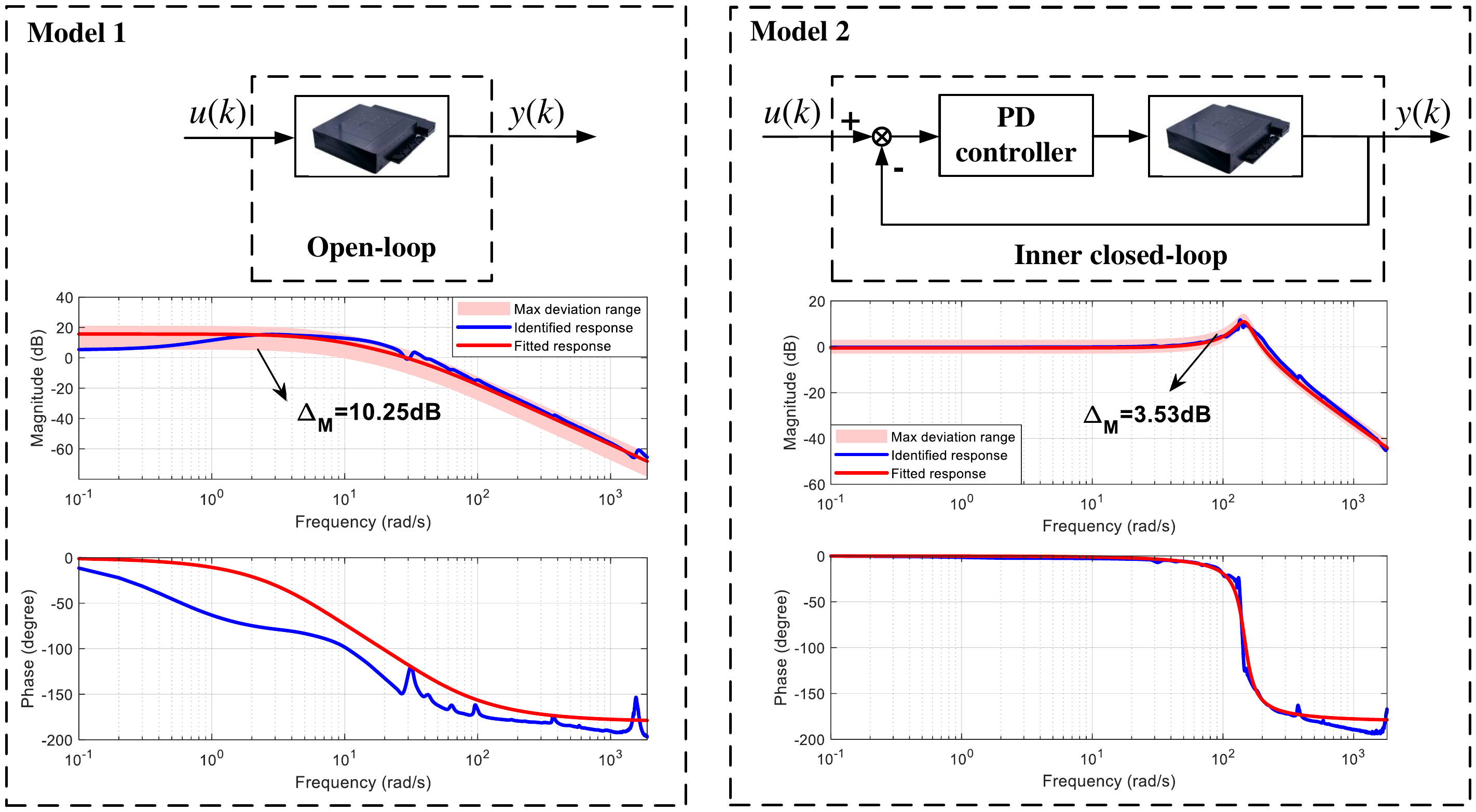}\\
  \caption{Different modeling methods and corresponding Bode plots of the identified/fitted model dynamics.}\label{bode}
\end{figure}
The nominal linear dynamics of the motion stage is modeled in two ways. The first one utilizes a direct open-loop approach (Model~1 in Figure~\ref{bode}), while the second one utilizes an indirect inner-closed-loop approach (Model~2 in the right of Figure~\ref{bode}).

In Model $1$, the maximum magnitude response deviation is $10.25$ dB. Also, there is a DC gain mismatch between the identified and the fitted response, which leads to infeasibility of the gray-box ESO design, because ESO requires the true value of the output gain $b$ in system~(\ref{dplant}).

In Model $2$, a PD controller ($K_{\rm p}=10,~\mbox{and}~K_{\rm d}=5{\rm e}^{-3}$) is embedded to form an inner closed-loop. As a result, the maximum magnitude response deviation is relatively small ($3.53$ dB). Moreover, the fitted response curve at low frequency is close to the identified one. Therefore, the gray-box ESO can be used for Model $2$ to further compensate for the high-frequency model uncertainties and to ensure the structural robustness of the TV-IMPC.

The detailed dynamics of Model $2$ is listed in~(\ref{expxd}) of the simulation section. Thus, the experimental parameter design is the same as that used in the simulations.

The reference signal of the experiments is magnified ($\lambda=80$) to better demonstrate the tracking precision of the proposed robust TV-IMPC, i.e.,
\begin{equation*}\label{gd2}
\begin{array}{rcl}
w(k+1)&=&\begin{bmatrix}1 &T_s(1+0.5\sin(2\pi t))\\T_s(-1+0.5\sin(5t)) & 1\end{bmatrix}w(k)\\[6mm]
r(k)&=&\begin{bmatrix}80 & 0\end{bmatrix}w(k)\,,
\end{array}
\end{equation*}
which requires a motion stroke of around $\pm 80$mm.

\subsubsection{Experimental results}
The experimental studies are divided into three parts, which will be discussed in sequence as follows.

1) Comparison of tracking results of the TV-IMPC with/without ESO.
\begin{figure}[!ht]
  \centering
  \includegraphics[width=0.8\hsize]{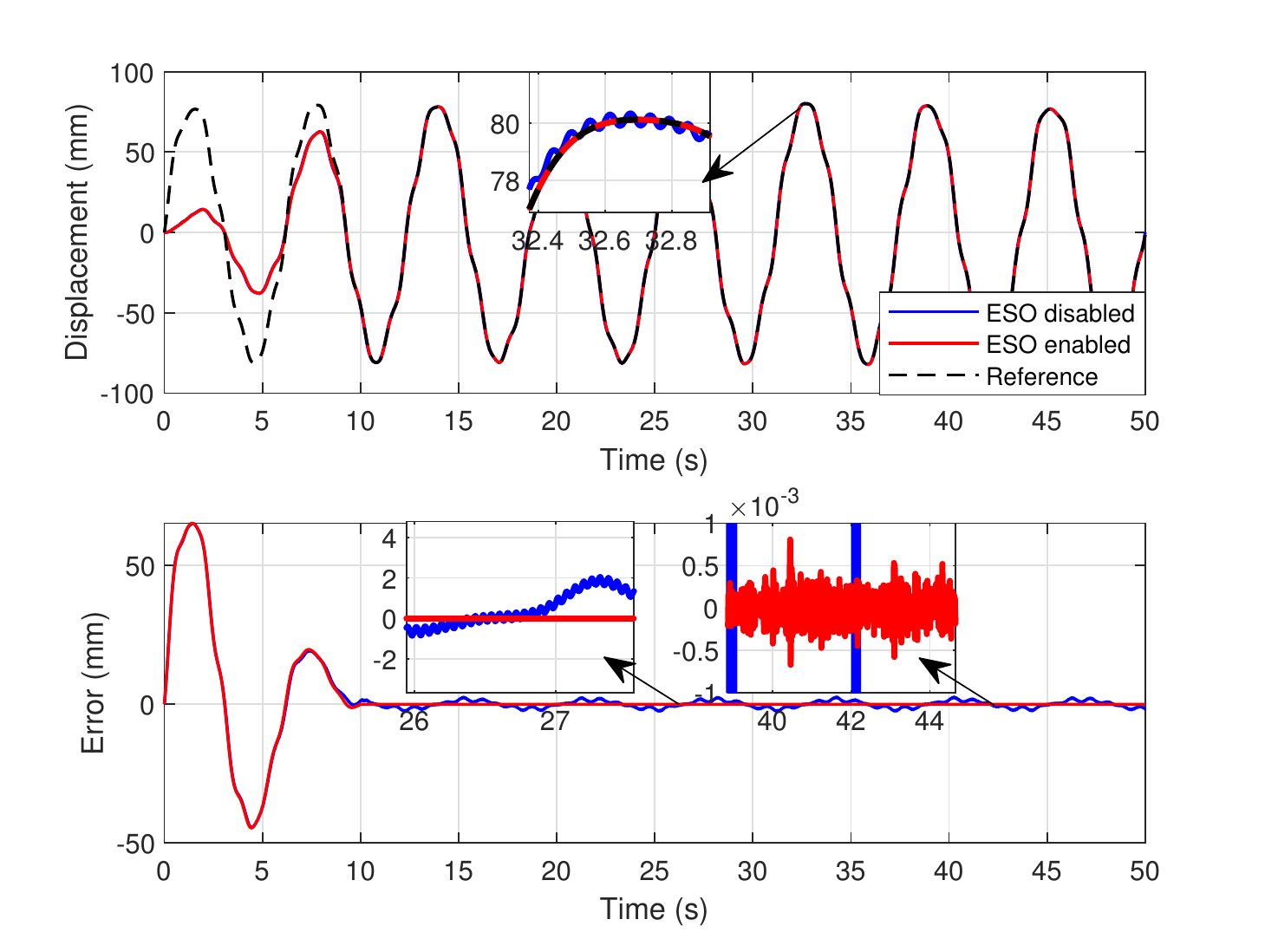}\\
  \caption{Tracking results of $r(k)$ by the TV-IMPC with/without ESO.}\label{noeso}
\end{figure}

The tracking experiments of the plant model 2 and the reference $r(k)$ are implemented respectively with the ESO feedback switched on/off. The tracking results and errors are plotted in Figure~\ref{noeso}.

Without the ESO, it is observed that the tracking errors of Model $2$ exhibit significant chattering. Indeed, the overall TV-IMPC control scheme for plant model $2$ is dual-loop. The high-frequency deviation between the nominal model and the real model violates Proposition~\ref{io}, thus the tracking cannot be asymptotic. The chattering arises due to the transient performance mismatch between the inner-loop bandwidth of the PD controller and the outer-loop bandwidth of the TV-IMPC controller.

In contrast, the ESO is capable of estimating and compensating for high-frequency model uncertainties and potential disturbances simultaneously. Therefore, the tracking errors with the ESO are much smaller and more stationary. Meanwhile, thanks to the time-varying internal model, the structure of the reference is not observed from the errors, and the relative error can be reduced to the level of $10^{-6}$.

2) Comparison of tracking results of the robust TV-IMPC with different ESO gain $L_2$.

We conduct experiments using two different values for the ESO gain $L_2$, namely $2.75{\rm e}^4$ and $1.02{\rm e}^6$, similar to the simulation parameters. The tracking errors for each value are shown in Figure~\ref{l2d}. It is found that increasing the ESO gain resulted in even better tracking performance, which is consistent with both theoretical and simulation results. With the larger $L_2$ value, the steady-state tracking RMSE is reduced to $61.14$nm, with a relative error of $1.08{\rm e}^{-6}$ and a maximum tracking error of $283.28$nm.
\begin{figure}[!ht]
  \centering
  \includegraphics[width=0.8\hsize]{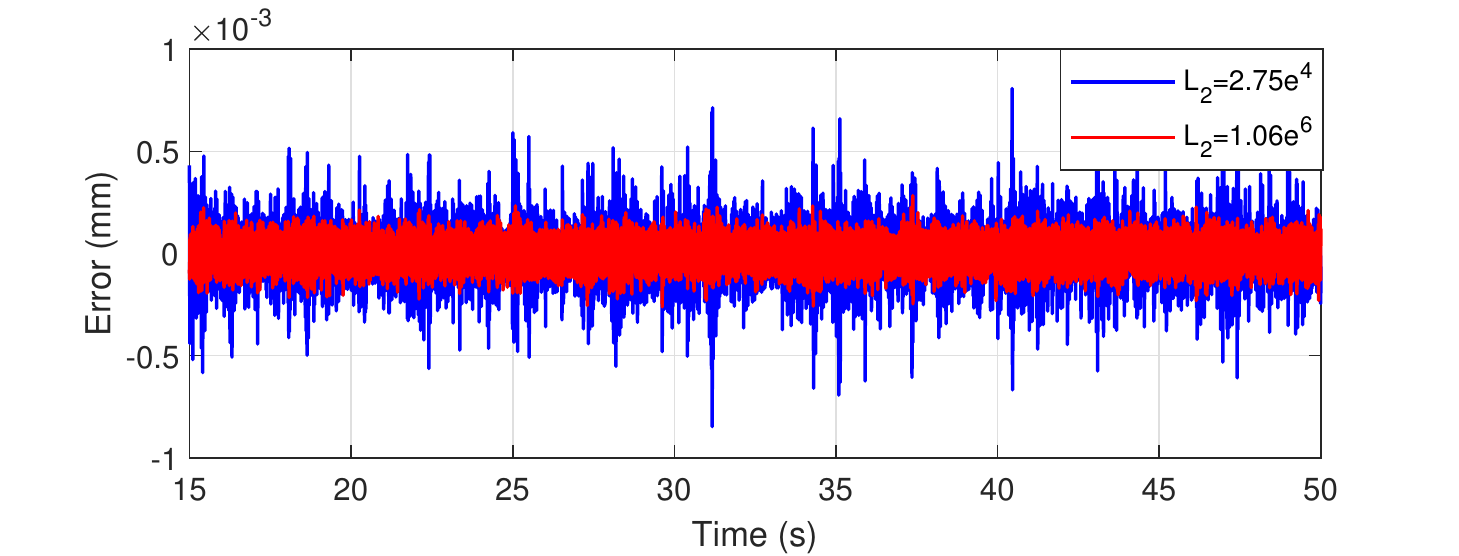}\\
  \caption{Tracking errors of $r(k)$ by the robust TV-IMPC with different ESO gain $L_2$.}\label{l2d}
\end{figure}

3) Comparison of tracking results of different plant models: Model $1$ and Model $2$.

Although the ESO structure cannot be used for Model $1$ due to the mismatch in the input gain $b$, we can compare the tracking performance of two different cases in this experiment,
\begin{itemize}
\item Case 1: Based on Model $1$ and the TV-IMPC method (without ESO);
\item Case 2: Based on Model $2$ and the robust TV-IMPC method (with ESO and $L_2=1.02{\rm e}^6$).
\end{itemize}

The tracking results of Cases $1$ and $2$ are shown in Figure~\ref{open}. It is observed that Case $1$ has a faster transient response compared to Case $2$, because the dual-loop control in Case $2$ narrows its bandwidth. However, the tracking errors of Case $1$ cannot converge to the level of stationary noise due to the unmodeled dynamics shown in the Bode plot of Figure~\ref{bode}.
\begin{figure}[!ht]
  \centering
  \includegraphics[width=0.8\hsize]{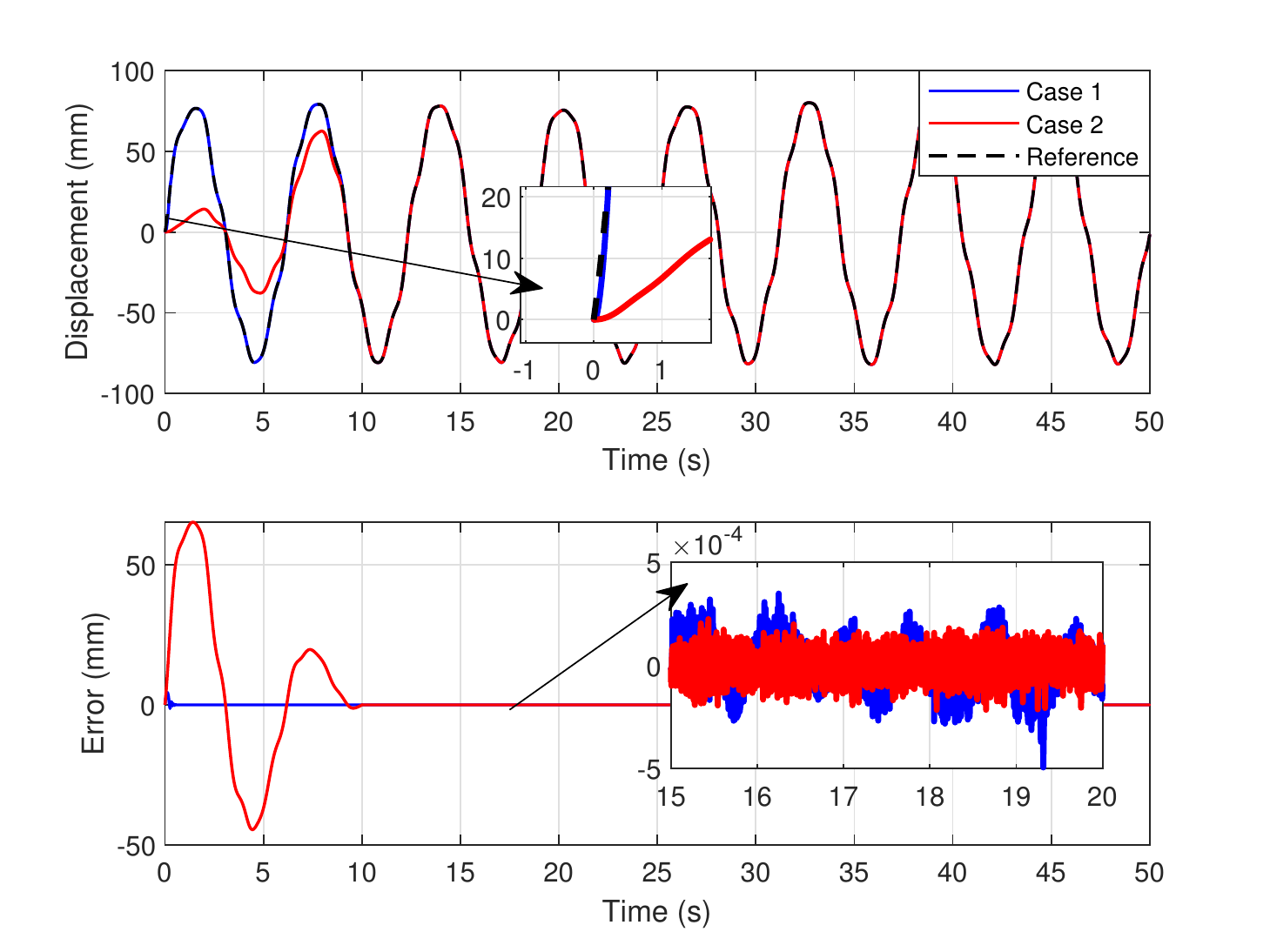}\\
  \caption{Tracking results of $r(k)$ by the TV-IMPC in Case $1$ and $2$.}\label{open}
\end{figure}

In summary, the robust TV-IMPC method outperforms the existing method in achieving ultra-precision tracking. The steady-state tracking RMSE ($\varepsilon_{\rm rms}$), maximum error ($\varepsilon_{\rm max}$), and relative error ($\delta_{\rm r}$) are listed in Table~\ref{indices}.
\begin{table}[!ht]
  \small
  \centering
     \makeatletter\def\@captype{table}\makeatother\caption{Experimental tracking performance indices}
  \begin{tabular}{c|c|ccc}
    \toprule
    Plant model & Model $1$ & \multicolumn{3}{c}{Model 2}\\[2mm]
    Control design & without ESO & without ESO & $L_2=2.75{\rm e}^4$ & $L_2=1.02{\rm e}^6$\\
    \midrule
    $\varepsilon_{\rm rms}$ [nm]& $112.89$ & $1.22{\rm e}^6$ & $135.77$ & $61.14$ \\[2mm]
    $\varepsilon_{\rm max}$ [nm]& $516.65$ & $2.66{\rm e}^6$ & $847.79$ & $283.28$ \\[2mm]
    $\delta_{\rm r}$ & $1.99{\rm e}^{-6}$ & $2.16{\rm e}^{-2}$ & $2.40{\rm e}^{-6}$ & $1.08{\rm e}^{-6}$ \\
    \bottomrule
  \end{tabular}
   \label{indices}
\end{table}
\section{Conclusion}\label{conclusion}
In this paper, we have proposed a robust design for the TV-IMPC method that utilizes the ESO to obtain the structural robustness of the time-varying internal model in the presence of plant model uncertainties and external disturbances.
In particular, the ESO is designed in the gray-box fashion to better estimate the potential model uncertainties and external disturbances. Furthermore, the boundedness of the ESO estimation error and the additive system uncertainty are analyzed, and hence a time-varying stabilizer can be developed for the augmented system of the time-varying internal model and ESO compensator.
Extensive simulation and experimental studies are conduced on a direct-drive servo stage and the results validate that the proposed robust TV-IMPC significantly improves the tracking precision compared to the existing TV-IMPC.
\section*{Acknowledgments}\label{ack}
This work was supported by the National Natural Science Foundation of China [Grant numbers 52275564, 51875313].

\bibliographystyle{unsrt}

\bibliography{references}

\begin{thebibliography}{10}

\bibitem{HUANG2022}
Wei-Wei Huang, Peng Guo, Chuxiong Hu, and Li-Min Zhu.
\newblock High-performance control of fast tool servos with robust disturbance
  observer and modified ${H}_\infty$ control.
\newblock {\em Mechatronics}, 84:102781, 2022.

\bibitem{haq2019}
Asad~Ul Haq and Dragan Djurdjanovic.
\newblock Robust control of overlay errors in photolithography processes.
\newblock {\em IEEE Transactions on Semiconductor Manufacturing},
  32(3):320--333, 2019.

\bibitem{mitrovic2022}
Aleksandra Mitrovic, William~S. Nagel, Kam~K. Leang, and Garrett~M. Clayton.
\newblock Closed-loop range-based control of dual-stage nanopositioning
  systems.
\newblock {\em IEEE/ASME Transactions on Mechatronics}, 26(3):1412--1421, 2021.

\bibitem{ding2020}
Shihong Ding, Ju~H Park, and Chih-Chiang Chen.
\newblock Second-order sliding mode controller design with output constraint.
\newblock {\em Automatica}, 112:108704, 2020.

\bibitem{wang2022}
Yongchao Wang, Zengjie Zhang, Cong Li, and Martin Buss.
\newblock Adaptive incremental sliding mode control for a robot manipulator.
\newblock {\em Mechatronics}, 82:102717, 2022.

\bibitem{yan2019}
Shen Yan, Mouquan Shen, Sing~Kiong Nguang, and Guangming Zhang.
\newblock Event-triggered ${H}_\infty$ control of networked control systems
  with distributed transmission delay.
\newblock {\em IEEE Transactions on Automatic Control}, 65(10):4295--4301,
  2019.

\bibitem{zhang2021}
Zhiming Zhang and Peng Yan.
\newblock Enhanced robust nanopositioning control for an {XY} piezoelectric
  stage with sensor delays: An infinite dimensional ${H}_\infty$ optimization
  approach.
\newblock {\em Mechatronics}, 75:102511, 2021.

\bibitem{hu2020}
Jinfei Hu, Chen Li, Zheng Chen, and Bin Yao.
\newblock Precision motion control of a 6-{D}o{F}s industrial robot with
  accurate payload estimation.
\newblock {\em IEEE/ASME Transactions on Mechatronics}, 25(4):1821--1829, 2020.

\bibitem{chen2021}
Gang Chen, Junhao Jiang, Liangmo Wang, and Weigong Zhang.
\newblock Clutch mechanical leg neural network adaptive robust control of shift
  process for driving robot with clutch transmission torque compensation.
\newblock {\em IEEE Transactions on Industrial Electronics},
  69(10):10343--10353, 2021.

\bibitem{zheng2017}
Minghui Zheng, Cong Wang, Liting Sun, and Masayoshi Tomizuka.
\newblock Design of arbitrary-order robust iterative learning control based on
  robust control theory.
\newblock {\em Mechatronics}, 47:67--76, 2017.

\bibitem{zhang2023}
Fan Zhang, Deyuan Meng, and Xuefang Li.
\newblock Chattering-free adaptive iterative learning for attitude tracking
  control of uncertain spacecraft.
\newblock {\em Automatica}, 151:110902, 2023.

\bibitem{fran1976}
Bruce~A Francis and Walter~Murray Wonham.
\newblock The internal model principle of control theory.
\newblock {\em Automatica}, 12(5):457--465, 1976.

\bibitem{isi1990}
Alberto Isidori and Christopher~I Byrnes.
\newblock Output regulation of nonlinear systems.
\newblock {\em IEEE Transactions on Automatic Control}, 35(2):131--140, 1990.

\bibitem{serrani2001}
Andrea Serrani, Alberto Isidori, and Lorenzo Marconi.
\newblock Semi-global nonlinear output regulation with adaptive internal model.
\newblock {\em IEEE Transactions on Automatic Control}, 46(8):1178--1194, 2001.

\bibitem{yang2019}
Ruohan Yang, Hao Zhang, Gang Feng, Huaicheng Yan, and Zhuping Wang.
\newblock Robust cooperative output regulation of multi-agent systems via
  adaptive event-triggered control.
\newblock {\em Automatica}, 102:129--136, 2019.

\bibitem{HU2022556}
Jinfei Hu, Han Lai, Zheng Chen, Xin Ma, and Bin Yao.
\newblock Desired compensation adaptive robust repetitive control of a
  multi-{D}o{F}s industrial robot.
\newblock {\em ISA Transactions}, 128:556--564, 2022.

\bibitem{mei2022}
Qicheng Mei, Jinhua She, Zhentao Liu, and Min Wu.
\newblock Estimation and compensation of periodic disturbance using
  internal-model-based equivalent-input-disturbance approach.
\newblock {\em Information Sciences}, 65(182205):1--182205, 2022.

\bibitem{zhang2010}
Zhen Zhang and Zongxuan Sun.
\newblock A novel internal model-based tracking control for a class of linear
  time-varying systems.
\newblock {\em Journal of Dynamic Systems, Measurement and Control,
  Transactions of the ASME}, 132(1):1--10, 2010.

\bibitem{zhang2014}
Zhen Zhang, Peng Yan, Huan Jiang, and Peiqing Ye.
\newblock A discrete time-varying internal model-based approach for high
  precision tracking of a multi-axis servo gantry.
\newblock {\em ISA transactions}, 53(5):1695--1703, 2014.

\bibitem{Paunonen2017}
Lassi Paunonen.
\newblock Robust output regulation for continuous-time periodic systems.
\newblock {\em IEEE Transactions on Automatic Control}, 62(9):4363--4375, 2017.

\bibitem{2014Time}
P.~K. Gillella, X.~Song, and Z.~Sun.
\newblock Time-varying internal model-based control of a camless engine valve
  actuation system.
\newblock {\em IEEE Transactions on Control Systems Technology},
  22(4):1498--1510, 2014.

\bibitem{schmidt2020}
Kevin Schmidt, Frieder Beirow, Michael B{\"o}hm, Thomas Graf, Marwan~Abdou
  Ahmed, and Oliver Sawodny.
\newblock Towards adaptive high-power lasers: Model-based control and
  disturbance compensation using moving horizon estimators.
\newblock {\em Mechatronics}, 71:102441, 2020.

\bibitem{LIU202281}
Zhen-Guo Liu, Wei Sun, and Weidong Zhang.
\newblock Robust adaptive control for uncertain nonlinear systems with odd
  rational powers, unmodeled dynamics, and non-triangular structure.
\newblock {\em ISA Transactions}, 128:81--89, 2022.

\bibitem{chen2015}
Wen-Hua Chen, Jun Yang, Lei Guo, and Shihua Li.
\newblock Disturbance-observer-based control and related methods—an overview.
\newblock {\em IEEE Transactions on Industrial Electronics}, 63(2):1083--1095,
  2015.

\bibitem{rouhani2021}
Shahin Rouhani, Tsu-Chin Tsao, and Jason~L Speyer.
\newblock Integrated {MIMO} fault detection and disturbance observer-based
  control.
\newblock {\em Mechatronics}, 73:102482, 2021.

\bibitem{li2019}
Ping Li and Guoli Zhu.
\newblock {IMC}-based {PID} control of servo motors with extended state
  observer.
\newblock {\em Mechatronics}, 62:102252, 2019.

\bibitem{zhao2021}
Jinsong Zhao, Tao Yang, Xinyu Sun, Jie Dong, Zhipeng Wang, and Chifu Yang.
\newblock Sliding mode control combined with extended state observer for an
  ankle exoskeleton driven by electrical motor.
\newblock {\em Mechatronics}, 76:102554, 2021.

\bibitem{WU2022}
Xiang Wu, Qun Lu, Jinhua She, Mingxuan Sun, Li~Yu, and Chun-Yi Su.
\newblock On convergence of extended state observers for nonlinear systems with
  non-differentiable uncertainties.
\newblock {\em ISA Transactions}, 2022.

\bibitem{gao2006}
Zhiqiang Gao.
\newblock Active disturbance rejection control: a paradigm shift in feedback
  control system design.
\newblock In {\em 2006 American Control Conference}, pages 2399--2405. IEEE,
  2006.

\bibitem{han2009}
Jingqing Han.
\newblock From {PID} to active disturbance rejection control.
\newblock {\em IEEE Transactions on Industrial Electronics}, 56(3):900--906,
  2009.

\bibitem{sayem2016}
AHM Sayem, Zhenwei Cao, and Zhihong Man.
\newblock Model free {ESO}-based repetitive control for rejecting periodic and
  aperiodic disturbances.
\newblock {\em IEEE Transactions on Industrial Electronics}, 64(4):3433--3441,
  2016.

\bibitem{wang2017}
Chengwen Wang, Long Quan, Zongxia Jiao, and Shijie Zhang.
\newblock Nonlinear adaptive control of hydraulic system with observing and
  compensating mismatching uncertainties.
\newblock {\em IEEE Transactions on Control Systems Technology},
  26(3):927--938, 2017.

\bibitem{zhang2016}
Han Zhang, Shen Zhao, and Zhiqiang Gao.
\newblock An active disturbance rejection control solution for the
  two-mass-spring benchmark problem.
\newblock In {\em 2016 American Control Conference}, pages 1566--1571. IEEE,
  2016.

\bibitem{1969Ra}
B.~Ramaswami and K.~Ramar.
\newblock On the transformation of time-variable systems to the phase-variable
  canonical form.
\newblock {\em IEEE Transactions on Automatic Control}, 14(4):417--419, 1969.

\bibitem{daafouz2001}
Jamal Daafouz and Jacques Bernussou.
\newblock Parameter dependent {L}yapunov functions for discrete time systems
  with time varying parametric uncertainties.
\newblock {\em Systems \& Control Letters}, 43(5):355--359, 2001.

\end{thebibliography}

\end{document}